\newtheorem{theorem}{Theorem}
\newtheorem*{theorem*}{Theorem}
\newtheorem{corollary}[theorem]{Corollary}
\newtheorem*{remark*}{Remark}
\newtheorem*{assumption*}{Assumption}
\DeclareMathOperator*{\argmin}{arg\,min}
\DeclarePairedDelimiter{\floor}{\lfloor}{\rfloor}
\newtheoremstyle{dotless}{}{}{\itshape}{}{\bfseries}{}{ }{}
\theoremstyle{dotless}
  \newtheorem*{assump*}{Assumption}
\begin{document}

\title{\textsc{Identification of Noncausal Models by Quantile Autoregressions}}
\author{\textbf{Alain Hecq}
\and \textbf{Li Sun\thanks{Corresponding author: Maastricht University, School of Business and Economics,
Department of Quantitative Economics, P.O.Box 616,
6200 MD Maastricht, The Netherlands. Email: l.sun@maastrichtuniversity.nl}{\let\thefootnote\relax\footnote{We would like to thank Sean Telg, Sébastien Fries as well as the participants in the following conferences and seminar places in which we have presented our paper: The 26th Annual Symposium of the Society for Nonlinear Dynamics and Econometrics, Tokyo, March 2018, The Eighth International Conference on Mathematical and Statistical Methods for Actuarial Sciences and Finance, Madrid, April 2018, The 12th NESG 2018 Conference, Amsterdam, May 2018, The 2nd International Conference on Econometrics and Statistics, Hong Kong, June 2018, The 4th Dongbei Econometrics Workshop, Dalian, June 2018, The 2018 IAAE Annual Conference International Association for Applied Econometrics, Montreal, June 2018, The 12th International Conference on Computational and Financial Econometrics, Pisa, December 2018, The University of Milano-Bicocca, The BI Norwegian Business School and The University of Nottingham. All errors are ours.} }} }
\date{Maastricht University \\[0.38cm] \today}
\maketitle

\begin{abstract}
We propose a model selection criterion to detect purely causal from purely noncausal models in the framework of quantile autoregressions (QAR). We also present asymptotics for the i.i.d. case with regularly varying distributed innovations in QAR. This new modelling perspective is appealing for
investigating the presence of bubbles in economic and financial time series, and is an alternative to approximate maximum likelihood methods. We illustrate our analysis using hyperinflation episodes in Latin American countries.

\bigskip

JEL Codes: C22

Keywords: causal and noncausal time series, quantile autoregressions, regularly varying variables, model selection criterion, bubbles, hyperinflation.

\end{abstract}

\section{Motivation}

Mixed causal and noncausal time series models have been recently used in order
(i) to obtain a stationary solution to explosive autoregressive processes,
(ii) to improve forecast accuracy, (iii) to model expectation mechanisms
implied by economic theory, (iv) to interpret non-fundamental shocks resulting
from the asymmetric information between economic agents and econometricians,
(v) to generate non-linear features from simple linear models with
non-Gaussian disturbances, (vi) to test for time reversibility. When the
distribution of innovations is known, a non-Gaussian likelihood approach can be used to discriminate between lag and lead polynomials of the dependent variable. For instance, the $\mathcal{R}$ package \texttt{MARX}
developed by Hecq, Lieb and Telg (2017) estimates univariate mixed models under the assumption of a Student's $t-$distribution with $v$ degrees of
freedom (see also Lanne and Saikkonen, 2011, 2013) as well as the Cauchy distribution as a
special case of the Student's $t$ when $v=1$. Gouri\'{e}roux and Zakoian
(2016) privilege the latter distribution to derive analytical
results. Gouri\'{e}roux and Zakoian (2015), Fries and Zakoian (2017) provide an additional flexibility to involve some skewness by using the family of alpha-stable distributions. However, all those aforementioned results require the estimation of a parametric distributional form. In this article we take another route.

The objective of this paper is to detect noncausal from causal models. To achieve that, we adopt a quantile regression (QR) framework and apply quantile autoregressions (QCAR hereafter) (Koenker and Xiao, 2006) on candidate models. Although we obviously also require non-Gaussian innovations in time series, we do not make any parametric distributional
assumption about the innovations. In quantile regressions a statistic called the sum of rescaled absolute residuals (SRAR hereafter) is used to distinguish model performances and reveal properties of time series. Remarkably we find that SRAR cannot always favour a model uniformly along quantiles. This issue is common for time series of asymmetric distributed innovations, which causes confusion in model detection and calls for a robust statistic to fit the goal. Considering that, we also propose to aggregate the SRAR information along quantiles. 

The rest of this paper is constructed as follows. Section~2 introduces mixed causal and noncausal models and our research background. In Section~3, we propose quantile autoregression in the time reverse version called quantile noncausal autoregression (QNCAR) along with a generalized asymptotic theorem in a stable law for both QCAR and QNCAR. Section~4 brings out the issue in the SRAR comparison for model detection. The use of the aggregate SRAR over all quantiles as a new model selection criterion is then proposed with the shape of SRAR curves being analysed. Furthermore, we illustrate our analysis using hyperinflation episodes of four Latin American countries in Section~\ref{sec:empirical_analysis}. Section~\ref{sec:conclusion} concludes this paper.

\section{Causal and noncausal time series models}

Brockwell and Davis introduce in their texbooks (1991, 2002) a univariate
noncausal specification as a way to rewrite an autoregressive process with
explosive roots into a process in reverse time with roots outside the unit
circle. This noncausal process possesses a stable forward looking solution
whereas the explosive autoregressive process in direct time does not. This
approach can be generalized to allow for both lead and lag polynomials. This
is the so called mixed causal-noncausal univariate autoregressive process for
$y_{t}$ that we denote MAR($r,s$)%
\begin{equation}
\pi(L)\phi(L^{-1})y_{t}=\varepsilon_{t}, \label{MAR}%
\end{equation}
where $\pi(L)=1-\pi_{1}L-...-\pi_{r}L^{r},$ $\phi(L^{-1})=1-\phi_{1}L^{-1}-...-\phi_{s}L^{-s}.$ $L$ is the usual backshift operator that creates
lags when raised to positive powers and leads when raised to negative powers,
i.e., $L^{j}y_{t}=y_{t-j}$ and $L^{-j}y_{t}=y_{t+j}$. The roots of both
polynomials are assumed to lie strictly outside the unit circle, that is $\pi(z)=0$ and
$\phi(z)=0$ for $|z|>1$ and therefore
\begin{equation}
y_{t}=\pi(L)^{-1}\phi(L^{-1})^{-1}\varepsilon_{t}=\sum\limits_{i=-\infty
}^{\infty}a_{i}\varepsilon_{t-i} 
\label{Laurent}%
\end{equation}
has an infinite two sided moving average representation. We also have that
$E(|\varepsilon_{t}|^{\delta})<\infty$ for $\delta>0$\footnote{The errors do not necessarily have finite second order moments. For $\delta \geq 2$ the second order moment exists, for $\delta\in [1,2) $ the errors have infinite variance but finite first order moment, for $\delta\in (0,1) $ the errors do not have finite order moments.} and the Laurent
expansion parameters are such that $\ \sum\limits_{i=-\infty}^{\infty}%
|a_{i}|^{\delta}<\infty.$ The representation (\ref{Laurent}) is sometimes
clearer than (\ref{MAR}) to motivate the terminology "causal/noncausal".
Indeed those terms refer to as the fact that $y_{t}$ depends on a causal
(resp. noncausal) component $\sum\limits_{i=0}^{\infty}a_{i}\varepsilon_{t-i}$
(resp. noncausal $\sum\limits_{i=-\infty}^{-1}a_{i}\varepsilon_{t-i}).$ With
this in mind, it is obvious that an autoregressive process with explosive
roots will be defined as noncausal.

Note that in (\ref{MAR}), the process $y_{t}$ is a purely causal MAR($r,0)$,
also known as the conventional causal AR($r$) process, when $\phi_{1}%
=...=\phi_{s}=0,$
\begin{equation}
\pi(L)y_{t}=\varepsilon_{t},
\label{eq:AR_r}
\end{equation}
while the process is a purely noncausal MAR($0,s)$
\begin{equation}
\phi(L^{-1})y_{t}=\varepsilon_{t},
\label{eq:MAR_0s}
\end{equation}
when $\pi_{1}=...=\pi_{r}=0.$

A crucial point of this literature is that innovation terms $\varepsilon_{t}$ must
be i.i.d. non-Gaussian to ensure the identifiability of a causal from a
noncausal specification (Breidt, Davis, Lii and Rosenblatt, 1991). The departure from
Gaussianity is not as such an ineptitude as a large part of macroeconomic and
financial time series display nonlinear and non-normal features.

We have already talked in Section 1 about the reasons for looking at models
with a lead component. Our main motivation in this paper lies in the fact that MAR$(r,s)$
models with non-Gaussian disturbances are able to replicate non-linear
features (e.g., bubbles, asymmetric cycles) that previously were usually
obtained by highly nonlinear models. As an example, we simulate in Figure~\ref{Simulation of a MAR(1,1) model, T=200} an MAR(1,1) of $(1-0.8L)(1-0.6L^{-1})y_{t}=\varepsilon_{t}$ with
$\varepsilon_{t} \overset{d}{\sim} t(3)$ for $200$ observations.\footnote{We use the
package MARX develop in $\mathcal{R}$ by Hecq, Lieb and Telg (2017).} One can
observe asymmetric cycles and multiple bubbles.%
\begin{remark*}
MAR($r,s$) models can be generated in two steps (see Gourieroux and Jasiak, 2016; Hecq,
Lieb and Telg, 2016). We propose in the Appendix an alternative method based on matrix representation that is very compact in code writing and intuitive in understanding. 
\end{remark*}

\begin{figure}[hptb]
\centering
\includegraphics[height=6cm, width=10cm]{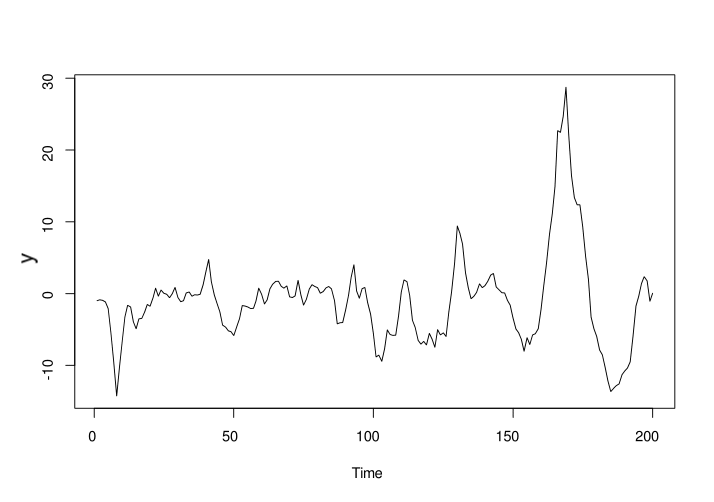}
\caption{Simulation of a MAR(1,1) model, T=200}%
\label{Simulation of a MAR(1,1) model, T=200}%
\end{figure}

Once a distribution or a group of distributions is chosen, the parameters in
$\pi(L)\phi(L^{-1})$ can be estimated.
Assuming for instance a non-standardized $t-$distribution for the innovation
process, the parameters of mixed causal-noncausal autoregressive models of the
form (\ref{MAR}) can be consistently estimated by approximate maximum
likelihood (AML). Let $(\varepsilon_{1},...,\varepsilon_{T})$ be a sequence of
i.i.d. zero mean $t-$distributed random variables, then its joint
probability density function can be characterized as
\[
f_{\varepsilon}(\varepsilon_{1},...,\varepsilon_{T}|\sigma,\nu)=\prod
_{t=1}^{T}\frac{\Gamma(\frac{\nu+1}{2})}{\Gamma(\frac{\nu}{2})\sqrt{\pi\nu
}\sigma}\left(  1+\frac{1}{\nu}\left(  \frac{\varepsilon_{t}}{\sigma}\right)
^{2}\right)  ^{-\frac{\nu+1}{2}},
\]
where $\Gamma(\cdot)$
denotes the gamma function. The corresponding (approximate) log-likelihood function conditional on the
observed data $y=(y_{1},...,y_{T})$ can be formulated as
\begin{align}
l_{y}(\boldsymbol{\phi},\boldsymbol{\varphi}%
,\boldsymbol{\lambda},\alpha|y)  &  =\left(  T-p\right)  \left[  \ln
(\Gamma((\nu+1)/2))-\ln(\sqrt{\nu\pi})-\ln(\Gamma(\nu/2))-\ln(\sigma)\right]
\nonumber\label{loglik}\\
&  -(\nu+1)/2\sum_{t=r+1}^{T-s}\ln(1+((\pi(L)\phi(L^{-1})y_{t}-\alpha
)/\sigma)^{2}/\nu),
\end{align}
where $p=r+s$ and $\varepsilon_{t}=\pi(L)\phi(L^{-1})y_{t}-\alpha$ is replaced by
a nonlinear function of the parameters when expanding the product of
polynomials. The distributional parameters are collected in
$\boldsymbol{\lambda}=[\sigma,\nu]^{\prime}$, with $\sigma$ representing the
scale parameter and $\nu$ the degrees of freedom. $\alpha$ denotes an
intercept that can be introduced in model (\ref{MAR}). Thus, the AML estimator corresponds to the
solution $\hat{\boldsymbol{\theta}}_{ML}=\arg\max_{\boldsymbol{\theta}%
\in\Theta}l_{y}(\boldsymbol{\theta}|y),$ with $\boldsymbol{\theta
}=[\boldsymbol{\phi}^{\prime},\boldsymbol{\varphi}^{\prime}%
,\boldsymbol{\lambda}^{\prime}]^{\prime}$ and $\Theta$ is a permissible parameter
space containing the true value of $\boldsymbol{\theta}$, say
$\boldsymbol{\theta}_{0}$, as an interior point. Since an analytical solution
of the score function is not directly available, gradient based numerical
procedures can be used to find $\hat{\boldsymbol{\theta}}_{ML}$. If $\nu>2$,
and hence $E(|\varepsilon_{t}|^{2})<\infty$, the AML estimator is $\sqrt{T}%
$-consistent and asymptotically normal. Lanne and Saikonen (2011) also show
that a consistent estimator of the limiting covariance matrix is obtained from
the standardized Hessian of the log-likelihood. For the estimation of the
parameters and the standard innovations as well as for the selection of mixed
causal-noncausal models we can also follow the procedure proposed by Hecq,
Lieb and Telg (2016). 

However, the AML estimation is based on a parametric form of the innovation term in~\eqref{MAR}, which makes this method not flexible enough to adapt uncommon distributions as complex in reality. To be more practical and get rid of strong distribution assumptions on innovations, in next section we adopt quantile regression methods with some properties discussed there. This paper only focuses on purely causal and noncausal models.

\section{QCAR \& QNCAR}
Koenker and Xiao (2006) have introduced a quantile autoregressive model of order $p$ denoted as QAR($p$) which is formulated as the following form:
\begin{equation}
y_{t}=\theta_{0}(u_{t})+\theta_{1}(u_{t})y_{t-1}+...+\theta_{p}(u_{t})y_{t-p},\qquad t = p+1,\ldots,T,
\label{QAR1}%
\end{equation}
where $u_{t}$ is a sequence of i.i.d. standard uniform random variables. In order to emphasize the causal characteristic of this kind of autoregressive models, we refer \eqref{QAR1} to as QCAR($p$) hereafter. Provided that the right-hand side of (\ref{QAR1}) is monotone increasing in $u_{t}%
$, the $\tau-$th conditional quantile function of $y_{t}$ can
be written as%
\begin{equation}
Q_{y_{t}}(\tau|y_{t-1},...y_{t-p})=\theta_{0}(\tau)+\theta_{1}(\tau
)y_{t-1}+...+\theta_{p}(\tau)y_{t-p}. \label{QAR2}%
\end{equation}
If an observed time series $\{y_{t}\}_{t=1}^{T}$ can be written into a QCAR($p$) process, its parameters as in (\ref{QAR2}) can be obtained from the following
minimization problem.
\begin{equation}
\hat{\boldsymbol{\theta}}(\tau)=\argmin\limits_{\boldsymbol{\theta}\in
\mathbb{R}^{p+1}}\sum\limits_{t=1}^{T}\rho_{\tau}(y_{t}-\boldsymbol{x}%
_{t}^{\prime}\boldsymbol{\theta}), \label{eq:minimization}%
\end{equation}
where $\rho_{\tau}(u):=u(\tau-I(u<0))$ is called the check function,
$\boldsymbol{x}_{t}^{\prime}:=\left[  1,y_{t-1},\ldots,y_{t-p}\right]  $, and
$\boldsymbol{\theta}^{\prime}:=\left[  \theta_{0},\theta_{1},\ldots,\theta_{t-p}\right].$ 
We define the sum of rescaled absolute residuals (SRAR) for each
pair of $(\tau,\boldsymbol{\theta})$ as
\begin{equation}
\text{SRAR}(\tau,\boldsymbol{\theta}):=\sum\limits_{t=1}^{T}\rho_{\tau}(y_{t}-\boldsymbol{x}_{t}^{\prime}\boldsymbol{\theta}). \label{eq:SRAR}%
\end{equation}
Substitute~\eqref{eq:SRAR} into~\eqref{eq:minimization}, the minimization problem~\eqref{eq:minimization} is written as 
\begin{equation}
\hat{\boldsymbol{\theta}}(\tau)=\argmin\limits_{\boldsymbol{\theta}\in
\mathbb{R}^{p+1}}\;\text{SRAR}(\tau,\boldsymbol{\theta}). \label{eq:minimization_SRAR}%
\end{equation}
The estimation consistency and asymptotic normality in the minimization problem~\eqref{eq:minimization} have been provided by Koenker and Xiao (2006). A modified simplex algorithm proposed by Barrodale and Roberts (1973) can be used to solve the minimization, and in practice
parameters for each $\tau-$th quantile can be obtained, for instance, through
the \texttt{rq()} function from the \textbf{quantreg} package in $\mathcal{R}$ or in EViews.

\subsection{QNCAR}
\label{subsec:estimation_QNCAR}
A QNCAR($p$) specification is introduced here as the noncausal counterpart of the QCAR($p$) model by reversing time, explicitly as follows:
\begin{equation}
Q_{y_{t}}(\tau|y_{t+1},...y_{t+p})=\phi_{0}(\tau)+\phi_{1}(\tau)y_{t+1}+...+\phi_{p}(\tau)y_{t+p}. \label{eq:QNCAR(p)}%
\end{equation}
Analogously to the QCAR($p$), the estimation of the QNCAR($p$) goes through solving 
$$
\hat{\boldsymbol{\theta}}(\tau)=\argmin\limits_{\boldsymbol{\theta}\in
\mathbb{R}^{p+1}}\;\text{SRAR}(\tau,\boldsymbol{\theta})
$$
\vspace{5pt}
with
$$
\boldsymbol{x}_{t}^{\prime}=\left[  1,y_{t+1},\ldots,y_{t+p}\right],
$$
where for the simplicity of the notations we use $\hat{\boldsymbol{\theta}}(\tau)$ to denote the estimate in quantile noncausal autoregression. Drawing on the asymptotics derived by Koenker and Xiao (2006), we present the following theorem for a QNCAR($p$) based on three assumptions which are made to ensure covariance stationarity of the time series (by (A1) and (A2)) and the existence of quantile estimates (by (A3)). 
\begin{remark*}
There is an issue in the estimation consistency of QCAR($p$) as reported by Fan and Fan (2010). This is due to the violation on the monotonicity requirement of the right side of \eqref{QAR1} in $u_t$ but not exclusively the monotonicity of $\theta_i(u_t)$ in $u_t$. So to recover an AR($p$) DGP of coefficients $\theta_i(u_t)\;(i=0,\ldots,p)$ monotonic in $u_t$, quantile autoregression is not a 100\% match tool unless the monotonicity requirement is met beforehand. This issue is also illustrated in Section~\ref{subsec:crossing_feature}.
\end{remark*}
 
\begin{theorem}
\label{thm:asym_QNCAR}{\ }
A QNCAR($p$) can be written in the following vectorized companion form:
\begin{equation}
\tilde{\boldsymbol{x}}_t = \mathbf{A}_t\tilde{\boldsymbol{x}}_{t+1} + \boldsymbol{\nu}_t, 
\label{eq:vector_QNCAR}
\end{equation}
where $\tilde{\boldsymbol{x}}^{\prime}_t :=\left[ y_t, y_{t+1},\ldots, y_{t+p-1}\right]$, $\boldsymbol{x}_{t}^{\prime}:=\left[  1, \tilde{\boldsymbol{x}}_t^{\prime} \right]$, $\mathbf{A}_t := \begin{bmatrix} 
\phi_{1,t} & \phi_{2,t} & \ldots & \phi_{p,t} \\
& \mathbf{I}_{p-1}  & & \boldsymbol{0}_{(p-1)\times 1}
\end{bmatrix}$
and 
$\boldsymbol{\nu}_t := \begin{bmatrix} 
\varepsilon_t \\
\boldsymbol{0}_{(p-1)\times 1}
\end{bmatrix}$,
satisfying the following assumptions:
\begin{enumerate}
\item[(A1)]: $\left\{ \varepsilon_t\right\}_{t=1}^{n}$ are i.i.d. innovations with mean $0$ and variance $\sigma^2 < \infty$. The distribution function of $\varepsilon_t$, denoted as $F(\cdot)$, has a continuous density $f(\cdot)$  with $f(\varepsilon)>0$ on $\mathcal{U}:= \left\{\varepsilon: 0< F(\varepsilon)<1 \right\}$.
\item[(A2)]: The eigenvalues of $E\left[ \mathbf{A}_t \otimes \mathbf{A}_t \right]$ have moduli less than one.
\item[(A3)]: $F_{y_t|\tilde{\boldsymbol{x}}_{t+1}}(\cdot):= \mathbf{P}\left[ y_t < \cdot \,\middle|\, y_{t+1},  y_{t+2}, \ldots, y_{t+p} \right]$ has derivative $f_{y_t|\tilde{\boldsymbol{x}}_{t+1}}(\cdot)$ which is uniformly integrable on $\mathcal{U}$ and non-zero with probability one.
\end{enumerate}
Then,
\begin{equation}
\mathbf{\Sigma}^{-\frac{1}{2}}\sqrt{T}\left(\hat{\boldsymbol{\theta}}(\tau )- \boldsymbol{\phi}(\tau) \right) \overset{d}{\sim} \mathbb{B}_{p+1}(\tau),
\label{eq:asym_qNCAR(p)}
\end{equation}
where $\mathbf{\Sigma}:= \mathbf{\Sigma}_1^{-1}\mathbf{\Sigma}_0\mathbf{\Sigma}_1^{-1}$, $\mathbf{\Sigma}_0:=E\left[ \boldsymbol{x}_{t}\boldsymbol{x}_{t}' \right]$, $\mathbf{\Sigma}_1:=\lim T^{-1}\sum^T_{t=1} f_{y_t|\tilde{\boldsymbol{x}}_{t+1}}\left( F_{y_t|\tilde{\boldsymbol{x}}_{t+1}}^{-1}\bigl(\tau\bigr)\right)\,\boldsymbol{x}_{t}\boldsymbol{x}_{t}^{\prime} $, $\boldsymbol{\phi}(\tau)':= \left[F^{-1}(\tau), \phi_1(\tau), \ldots, \phi_p(\tau) \right]$, $\mathbb{B}_{p+1}(\tau):= \mathcal{N}\left( \boldsymbol{0},\tau(1-\tau)\mathbf{I}_{p+1} \right)$ with sample size $T$.
\end{theorem}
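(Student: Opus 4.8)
The plan is to reduce the forward-looking QNCAR to an ordinary QCAR by reversing the time index and then invoke the Koenker and Xiao (2006) asymptotic theory essentially verbatim. Concretely, I would set $w_s := y_{-s}$ and $\eta_s := \varepsilon_{-s}$: under (A1) the relabelled innovations $\{\eta_s\}$ are again i.i.d.\ with the same marginal $F$, and the companion recursion \eqref{eq:vector_QNCAR} becomes, after this relabelling, the standard forward companion form of a random-coefficient AR($p$) driven by $\{\eta_s\}$. Thus QNCAR estimation on $\{y_t\}$ coincides exactly with QCAR estimation on the reversed series $\{w_s\}$, and it remains to check that the three assumptions deliver precisely what the Koenker--Xiao machinery requires.

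I would first dispatch the two structural hypotheses. Assumption (A2), that the eigenvalues of $E[\mathbf{A}_t \otimes \mathbf{A}_t]$ have moduli below one, is exactly the Koenker--Xiao stationarity condition for the companion process; since the Kronecker product and the spectral radius are invariant under the relabelling $s \mapsto -t$, it yields a unique strictly stationary and ergodic solution for the reversed process, with finite second moments by the $\sigma^2<\infty$ clause of (A1). The decisive structural fact is that in a purely noncausal specification $\varepsilon_t$ is independent of the future regressors $\tilde{\boldsymbol{x}}_{t+1}=(y_{t+1},\ldots,y_{t+p})$: in the two-sided representation \eqref{Laurent} the block $(y_{t+1},\ldots,y_{t+p})$ is a measurable function of $\{\varepsilon_{t+1},\varepsilon_{t+2},\ldots\}$ alone, which are independent of $\varepsilon_t$. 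Consequently the conditional law of $y_t$ given $\tilde{\boldsymbol{x}}_{t+1}$ is that of the (known) linear index plus an independent $\varepsilon_t$, so the conditional $\tau$-quantile is the affine map of \eqref{eq:QNCAR(p)} with intercept $F^{-1}(\tau)$, matching $\boldsymbol{\phi}(\tau)$; assumption (A3) then guarantees that this conditional quantile is well defined and that the density-weighted matrix $\mathbf{\Sigma}_1$ exists and is nonsingular.

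With these translations in hand I would run the convexity argument on the reversed data. Writing $\varepsilon_{t\tau}:=y_t-\boldsymbol{x}_t'\boldsymbol{\phi}(\tau)$ and $\delta:=\sqrt{T}(\boldsymbol{\theta}-\boldsymbol{\phi}(\tau))$, the recentred objective $Z_T(\delta):=\sum_t[\rho_\tau(\varepsilon_{t\tau}-\boldsymbol{x}_t'\delta/\sqrt{T})-\rho_\tau(\varepsilon_{t\tau})]$ is convex in $\delta$, and a Knight expansion gives $Z_T(\delta)=-\delta'\mathbf{W}_T+\tfrac{1}{2}\delta'\mathbf{\Sigma}_1\delta+o_p(1)$ with $\mathbf{W}_T:=T^{-1/2}\sum_t\boldsymbol{x}_t\,\psi_\tau(\varepsilon_{t\tau})$ and $\psi_\tau(u)=\tau-I(u<0)$. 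Because $\boldsymbol{\phi}(\tau)$ is the conditional-quantile coefficient we have $\mathbf{P}[\varepsilon_{t\tau}<0\mid\tilde{\boldsymbol{x}}_{t+1}]=\tau$, hence $E[\psi_\tau(\varepsilon_{t\tau})\mid\tilde{\boldsymbol{x}}_{t+1}]=0$, so $\boldsymbol{x}_t\,\psi_\tau(\varepsilon_{t\tau})$ is a martingale-difference array with respect to the reverse-time filtration $\sigma(\varepsilon_t,\varepsilon_{t+1},\ldots)$; the martingale CLT then gives $\mathbf{W}_T\overset{d}{\to}\mathcal{N}(\boldsymbol{0},\tau(1-\tau)\mathbf{\Sigma}_0)$. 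Pollard's convexity lemma forces the minimiser to satisfy $\hat\delta=\mathbf{\Sigma}_1^{-1}\mathbf{W}_T+o_p(1)$, whence $\sqrt{T}(\hat{\boldsymbol{\theta}}(\tau)-\boldsymbol{\phi}(\tau))\overset{d}{\to}\mathcal{N}(\boldsymbol{0},\tau(1-\tau)\mathbf{\Sigma}_1^{-1}\mathbf{\Sigma}_0\mathbf{\Sigma}_1^{-1})$; premultiplying by $\mathbf{\Sigma}^{-1/2}$ standardises the covariance to $\tau(1-\tau)\mathbf{I}_{p+1}$, which is the limit $\mathbb{B}_{p+1}(\tau)$ claimed in \eqref{eq:asym_qNCAR(p)}.

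The step I expect to be most delicate is the martingale-difference/CLT argument in reverse time. In the causal QAR of Koenker--Xiao the score is a martingale difference with respect to the natural (past) filtration, whereas here the regressors are future observations and the relevant filtration runs backward; one must verify that the reversed process is adapted and suitably mixing so that the martingale CLT (equivalently, a CLT for stationary ergodic sequences) genuinely applies, and confirm that (A2) is the correct stationarity condition once time is reversed rather than merely by analogy. A secondary technical point is the uniform integrability in (A3), which is what licenses interchanging limit and expectation in forming $\mathbf{\Sigma}_1$ and controlling the remainder in the quadratic expansion uniformly on compact sets of $\delta$.
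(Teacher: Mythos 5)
Your proposal is correct and follows essentially the same route as the paper, which does not write out a separate proof of this theorem but obtains it precisely by reversing time and importing the Koenker--Xiao (2006) QAR asymptotics (Knight expansion, martingale-difference score with respect to the reverse-time filtration, Pollard's convexity lemma); your write-up simply supplies the details the paper leaves implicit. Your identification of (A2) as the reversed-time stationarity condition and of the independence of $\varepsilon_t$ from $(y_{t+1},\ldots,y_{t+p})$ as the key structural fact matches the paper's intent.
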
 
The above result can be further simplified into Corollary~\ref{corollary_constantcoef} by adding the following assumption:
\begin{enumerate}
\item[(A4):]
\textit{The coefficient matrix $\mathbf{A}_t$ in~\eqref{eq:vector_QNCAR} is constant over time. (We denote $\mathbf{A} := \begin{bmatrix} 
\phi_{1} & \phi_{2} & \ldots & \phi_{p} \\
& \mathbf{I}_{p-1}  & & \boldsymbol{0}_{(p-1)\times 1}
\end{bmatrix}$} for $\mathbf{A}_t$  under this assumption.)
\end{enumerate} 
\begin{corollary}
\label{corollary_constantcoef}
Under assumptions (A1), (A2), (A3) and (A4), 
\begin{equation}
\sqrt{T}\,f\left( F^{-1}\bigl(\tau\bigr)\right)\,\mathbf{\Sigma}_0^{\frac{1}{2}}\left(\hat{\boldsymbol{\theta}}(\tau )- \boldsymbol{\phi}_{\tau}\right) \overset{d}{\sim} \mathbb{B}_{p+1}(\tau),
\end{equation}
where $\boldsymbol{\phi}_{\tau} := \left[F^{-1}(\tau), \phi_1, \ldots, \phi_p \right]$.
\end{corollary}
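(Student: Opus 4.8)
The plan is to \emph{specialize} the sandwich covariance of Theorem~\ref{thm:asym_QNCAR} under (A4) and to show that the single scalar $f(F^{-1}(\tau))$ absorbs the entire role of the weighting matrix $\mathbf{\Sigma}_1$. When $\mathbf{A}_t = \mathbf{A}$ is constant, only the intercept $\phi_0(u_t)$ of the random-coefficient representation varies with $u_t$, so the process reduces to a genuine noncausal AR($p$),
\begin{equation*}
y_t = \phi_1 y_{t+1} + \cdots + \phi_p y_{t+p} + \varepsilon_t, \qquad \varepsilon_t = \phi_0(u_t) = F^{-1}(u_t),
\end{equation*}
with $\varepsilon_t$ i.i.d. of distribution $F$. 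In particular the slope part of $\boldsymbol{\phi}(\tau)$ ceases to depend on $\tau$, which is precisely the replacement of $\boldsymbol{\phi}(\tau)$ by $\boldsymbol{\phi}_{\tau} = [F^{-1}(\tau), \phi_1, \ldots, \phi_p]$ in the statement.

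First I would establish that the conditioning acts as a pure location shift. Writing $m_t := \phi_1 y_{t+1} + \cdots + \phi_p y_{t+p}$, which is measurable with respect to $\tilde{\boldsymbol{x}}_{t+1} = [y_{t+1}, \ldots, y_{t+p}]$, the forward-looking moving-average solution makes $y_{t+1}, \ldots, y_{t+p}$ functions of $\{\varepsilon_s : s \geq t+1\}$ alone, hence independent of $\varepsilon_t$. Consequently
\begin{equation*}
F_{y_t|\tilde{\boldsymbol{x}}_{t+1}}(c) = F(c - m_t), \qquad f_{y_t|\tilde{\boldsymbol{x}}_{t+1}}(c) = f(c - m_t), \qquad F_{y_t|\tilde{\boldsymbol{x}}_{t+1}}^{-1}(\tau) = F^{-1}(\tau) + m_t,
\end{equation*}
so that evaluating the conditional density at the conditional quantile cancels $m_t$ and yields $f_{y_t|\tilde{\boldsymbol{x}}_{t+1}}(F_{y_t|\tilde{\boldsymbol{x}}_{t+1}}^{-1}(\tau)) = f(F^{-1}(\tau))$, a constant independent of $t$ and of the conditioning variables.

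Next I would substitute this constant into the definition of $\mathbf{\Sigma}_1$. Pulling $f(F^{-1}(\tau))$ out of the average and invoking the law of large numbers for the stationary sequence (valid under (A1)--(A2)) gives $\mathbf{\Sigma}_1 = f(F^{-1}(\tau))\,\mathbf{\Sigma}_0$, whence $\mathbf{\Sigma}_1^{-1} = f(F^{-1}(\tau))^{-1}\mathbf{\Sigma}_0^{-1}$ and
\begin{equation*}
\mathbf{\Sigma} = \mathbf{\Sigma}_1^{-1}\mathbf{\Sigma}_0\mathbf{\Sigma}_1^{-1} = f(F^{-1}(\tau))^{-2}\,\mathbf{\Sigma}_0^{-1}.
\end{equation*}
Because $\mathbf{\Sigma}_0$ is symmetric positive definite and $f(F^{-1}(\tau)) > 0$ by (A1), the symmetric square root gives $\mathbf{\Sigma}^{-1/2} = f(F^{-1}(\tau))\,\mathbf{\Sigma}_0^{1/2}$; inserting this together with $\boldsymbol{\phi}(\tau) = \boldsymbol{\phi}_{\tau}$ into~\eqref{eq:asym_qNCAR(p)} reproduces the corollary verbatim.

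The step I expect to be the main obstacle is the first one: rigorously justifying the independence of $\varepsilon_t$ from $(y_{t+1}, \ldots, y_{t+p})$ that underlies the location-shift identity. This requires reading off from the forward-looking Laurent/moving-average representation that the leads carry no information about the contemporaneous innovation, and then checking that the uniform integrability in (A3) legitimizes passing the constant through the limit defining $\mathbf{\Sigma}_1$. Once that is in place, the remainder is the one-line algebraic rescaling displayed above.
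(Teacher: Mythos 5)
Your proposal is correct and is essentially the argument the paper intends: the paper states Corollary~\ref{corollary_constantcoef} as a direct specialization of Theorem~\ref{thm:asym_QNCAR} without writing out a proof, and the intended derivation is exactly your location-shift observation that $f_{y_t|\tilde{\boldsymbol{x}}_{t+1}}\bigl(F_{y_t|\tilde{\boldsymbol{x}}_{t+1}}^{-1}(\tau)\bigr)=f\bigl(F^{-1}(\tau)\bigr)$ under (A4), which collapses $\mathbf{\Sigma}_1$ to $f\bigl(F^{-1}(\tau)\bigr)\mathbf{\Sigma}_0$ and hence $\mathbf{\Sigma}^{-1/2}$ to $f\bigl(F^{-1}(\tau)\bigr)\mathbf{\Sigma}_0^{1/2}$. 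The independence of $\varepsilon_t$ from $(y_{t+1},\ldots,y_{t+p})$ that you flag as the main obstacle follows immediately from the forward-looking moving-average representation guaranteed by (A2) under (A4), so there is no gap.
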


As can be seen, QCAR($p$) and QNCAR($p$) generalize the classical purely causal and purely noncausal models respectively by allowing random coefficients on lag or lead regressors over time. Corollary~2 provides additional results when the same coefficients except the intercept are used to generate each quantile. However, the moment requirement in (A1) is very strict for heavy tailed time series. In order to study noncausality by QAR in heavy tailed distributions, we have to show its applicability without assumption (A1). This goal is achieved by Theorem~\ref{thm:asym_regvary_NCAR} which presents the asymptotic behaviour of the QAR estimator for a classical purely noncausal model. Similarly, the asymptotics for a classical purely causal model follows right after reversing time.
\\

\begin{theorem}[Asymptotics in regularly varying distributed innovations] \label{thm:asym_regvary_NCAR}{\ } \\ 
Under Assumption (A4), a purely noncausal AR(p) of the following form
$$
\phi(L^{-1})y_{t}=\varepsilon_{t},
$$
where $\phi(L^{-1})=1-\phi_{1}L^{-1}-...-\phi_{p}L^{-p}$, also
satisfies the following assumptions:
\begin{enumerate}
\item[(A5)]: $\left\{ \varepsilon_t\right\}_{t=1}^{n}$ are i.i.d. innovation variables with regularly vary tails defined as
\begin{equation}
\label{eq:regular_vary_tail}
P\left(\left|\varepsilon_t\right|>x\right) = x^{-\alpha}L(x),
\end{equation}
where $L(x)$ is slowly varying at $\infty$ and $0<\alpha<2$.
There is a sequence $\left\{a_T\right\}$ satisfying
\begin{equation}
T \cdot P\left\{ |\varepsilon_t|>a_T\,x \right\} \rightarrow x^{-\alpha} \qquad \text{for all}\; x>0.
\label{eq:aT}
\end{equation} 
with $b_T = \mathbb{E}\left[ \varepsilon_t\,I[|\varepsilon_t|\leq a_T]\right]= 0$.\footnote{Without loss of generality, we assume $b_T$ to be zero in the derivation for the simplicity.} The distribution function of $\varepsilon_t$, denoted as $F(\cdot)$, has continuous density $f(\cdot)$  with $f(\varepsilon)>0$ on $\left\{\varepsilon: 0< F(\varepsilon)<1 \right\}$ in probability one;
\item[(A6)]: The roots of the polynomial $\phi(z)$ are greater than one, such that $y_t$ can be written into
\begin{equation}
	y_t = \sum\limits_{j=0}^{\infty}c_j\,\varepsilon_{t+j}, 
\end{equation}
where $\sum\limits_{j=0}^{\infty} j\,|c_j|^{\delta} < \infty $ for some $\delta<\alpha, \delta\leq 1$.
\end{enumerate}
Then
\begin{equation}
\begin{aligned}
	\frac{f\left( F^{-1}(\tau)\right)\cdot a_T\sqrt{T}}{\sqrt{\tau(1-\tau)}\,}  & \left(\hat{\boldsymbol{\theta}}(\tau ) - \boldsymbol{\phi}_{\tau} \right)   \overset{d}{\sim} \qquad\qquad	\\
 \begin{bmatrix} 
1 		& \boldsymbol{0} 	\\
\boldsymbol{0}		& \Omega_1^{-1}\Omega_{\boldsymbol{S}}^{-1}
\end{bmatrix} & \left[ W(1), \sum\limits^{\infty}_{j=0} c_j\, \int^1_0 \mathcal{S}_{\alpha}(s)\,dW(s)\,, \ldots, \sum\limits^{\infty}_{j=0} c_j\, \int^1_0 \mathcal{S}_{\alpha}(s+\frac{p-1}{T})\,dW(s) \right]_{(p+1)\times 1}.
\end{aligned}
\label{eq:asym_qNCAR(p)}
\end{equation}
where $\boldsymbol{\phi}_{\tau} := \left[\frac{F^{-1}(\tau)}{a_T}, \phi_1, \ldots, \phi_p  \right]$, 
$\Omega_1$ being a $p\times p$ matrix with entry $\omega_{ik}:= \sum\limits^{\infty}_{j=0}  c_j\, c_{j+|k-i|}$ at the $i$-th row and the $k$-th column
, $\left\{\mathrm{S}_{\alpha}(s)\right\}$ being a process of stable distributions with index $\alpha$ which is independent of Brownian motion $\left\{W(s) \right\}$, and $\Omega_{\boldsymbol{S} }$ being a $p\times p$ diagonal matrix with the $j-$th diagonal entry being $ \int_0^1 \mathrm{S}_{\alpha}^2(s + \frac{j-1}{p})\,ds $, $j\in\left\{1,2,\ldots,p \right\}$. In this theorem the intercept regressor in QNCAR($p$) is changed to $a_T$ so that $\boldsymbol{x}^{\prime}_t :=\left[a_T, y_t, y_{t+1},\ldots, y_{t+p-1}\right]$.
\end{theorem}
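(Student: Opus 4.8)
\emph{Strategy.} I would prove the result by the convexity argument of Knight and Pollard that underlies the Koenker--Xiao (2006) asymptotics, transporting every limiting operation into the regularly-varying regime $0<\alpha<2$. First recenter the criterion at the population parameter $\boldsymbol{\phi}_{\tau}$. Writing $u_t := y_t - \boldsymbol{x}_t'\boldsymbol{\phi}_{\tau}$ and using (A4) together with the noncausal moving-average representation of (A6), the intercept coefficient equals $F^{-1}(\tau)/a_T$ and the slope coefficients equal the AR parameters $\phi_1,\ldots,\phi_p$, so the residual at the truth collapses to $u_t = \varepsilon_t - F^{-1}(\tau)$. Consequently $\psi_\tau(u_t) := \tau - I(u_t<0)$ is a \emph{bounded}, i.i.d., mean-zero function of the single innovation $\varepsilon_t$, with variance $\tau(1-\tau)$. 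The structural fact I would use throughout is that for $j\ge 1$ the regressor $y_{t+j} = \sum_{k\ge0} c_k\varepsilon_{t+j+k}$ depends only on innovations dated $t+j$ and later, hence is \emph{independent of} $\varepsilon_t$ and therefore of $\psi_\tau(u_t)$; this is the probabilistic signature of noncausality, and it is what later decouples the heavy-tailed design from the bounded score.

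\emph{Reduction to a convex limit.} Introduce the local parameter through $\boldsymbol{\theta} = \boldsymbol{\phi}_{\tau} + \frac{\sqrt{\tau(1-\tau)}}{f(F^{-1}(\tau))\,a_T\sqrt{T}}\,\boldsymbol{\delta}$ and set $V_T(\boldsymbol{\delta}) := \sum_t\bigl[\rho_\tau(u_t - \boldsymbol{w}_t'\boldsymbol{\delta}) - \rho_\tau(u_t)\bigr]$, where $\boldsymbol{w}_t$ is $\boldsymbol{x}_t$ scaled by the reciprocal normalization; $V_T$ is convex in $\boldsymbol{\delta}$ and is minimized at the normalized estimator in the statement. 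Knight's identity $\rho_\tau(u-v)-\rho_\tau(u) = -v\,\psi_\tau(u) + \int_0^v\bigl[I(u\le s)-I(u\le 0)\bigr]\,ds$ then splits $V_T(\boldsymbol{\delta}) = -\boldsymbol{\delta}'\boldsymbol{\xi}_T + R_T(\boldsymbol{\delta})$, the linear term $\boldsymbol{\xi}_T$ being the normalized score and $R_T$ the curvature remainder. By the convexity lemma it suffices to prove finite-dimensional convergence of $V_T$ to a convex limit with a unique minimizer; the minimizers then converge in distribution.

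\emph{Heavy-tailed limits (the core).} This is where the $\alpha<2$ regime forces a departure from the classical central limit theorem and splits the problem by blocks. For the intercept coordinate the regressor is the constant $a_T$, so the associated score is $T^{-1/2}\sum_t\psi_\tau(u_t)\overset{d}{\to}\sqrt{\tau(1-\tau)}\,W(1)$ by the ordinary i.i.d. CLT, delivering the $W(1)$ entry and the deterministic $1$ in the limiting matrix. For the slope coordinates I would invoke the point-process and functional stable-limit theory for sample covariances of heavy-tailed linear processes (Davis and Resnick): the partial sums $a_T^{-1}\sum_{t\le[Ts]}y_{t+j}$ converge to the stable process $\mathcal{S}_\alpha$; the normalized second moments $a_T^{-2}\sum_t y_{t+i}y_{t+k}$ converge with the deterministic coefficient-autocovariance structure $\omega_{ik}=\sum_j c_j c_{j+|k-i|}$ factored against the random functional $\int_0^1\mathcal{S}_\alpha^2$ of the stable process, producing $\Omega_1$ and the diagonal $\Omega_{\boldsymbol{S}}$ in the curvature; and the cross terms between score and regressors converge to the stochastic integrals $\sum_j c_j\int_0^1\mathcal{S}_\alpha(s+\cdots)\,dW(s)$ appearing on the right. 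Here the independence of $\mathcal{S}_\alpha$ (driven by the future innovations that enter the regressors) from $W$ (driven by the contemporaneous $\psi_\tau(\varepsilon_t)$) is exactly the noncausal independence noted above, and it is what renders those cross-limits well-defined, conditionally Gaussian integrals.

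\emph{Assembly and the main obstacle.} Collecting the pieces gives $V_T(\boldsymbol{\delta})\overset{d}{\to} -\boldsymbol{\delta}'\boldsymbol{\xi} + \tfrac12\boldsymbol{\delta}'M\boldsymbol{\delta}$ with $\boldsymbol{\xi}$ the limiting score vector and $M = \mathrm{diag}(1,\Omega_{\boldsymbol{S}}\Omega_1)$, almost surely positive definite; the convexity lemma then yields $\hat{\boldsymbol{\delta}}_T \overset{d}{\to} M^{-1}\boldsymbol{\xi}$, and inverting the block-diagonal $M$ reproduces exactly the matrix $\mathrm{diag}(1,\Omega_1^{-1}\Omega_{\boldsymbol{S}}^{-1})$ acting on $[W(1),\ldots]'$. \textbf{The main obstacle} is the joint functional convergence in the third step: establishing the simultaneous stable-and-Brownian weak limit of the score, the design, and the cross terms for an \emph{anticipative} (noncausal) heavy-tailed process, proving that the cross sums converge to genuine $\int\mathcal{S}_\alpha\,dW$ integrals, and rigorously reconciling the two distinct scalings — the heavy $a_T$ scaling of the regressor block against the $\sqrt{T}$ scaling of the bounded score — so that they combine into the single stated $a_T\sqrt{T}$ normalization. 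A secondary but nontrivial difficulty is controlling $R_T(\boldsymbol{\delta})$ uniformly enough on compacta to apply the convexity lemma, given that its curvature is itself a random, heavy-tail-driven object rather than a deterministic matrix.
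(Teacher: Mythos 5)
Your proposal is correct and takes essentially the same route as the paper's proof: recentering at $\boldsymbol{\phi}_{\tau}$ so that $u_{t\tau}=\varepsilon_t-F^{-1}(\tau)$, applying Knight's check-function identity to split the convex recentered criterion into a linear score term and a quadratic curvature term, invoking the Davis--Resnick/Knight stable limit theory for the heavy-tailed design, sample covariances and cross terms (with the bounded score handled by the ordinary CLT and the independence of $\mathcal{S}_{\alpha}$ from $W$ coming from noncausality), and finally using the convexity lemma to pass to the argmin $M^{-1}\boldsymbol{\xi}$ with $M=\mathrm{diag}(1,\Omega_{\boldsymbol{S}}\Omega_1)$. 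The only difference is cosmetic: you absorb the factor $f\left(F^{-1}(\tau)\right)/\sqrt{\tau(1-\tau)}$ into the local parameter at the outset, whereas the paper works with $\boldsymbol{\nu}=a_T\sqrt{T}\left(\hat{\boldsymbol{\theta}}(\tau)-\boldsymbol{\phi}_{\tau}\right)$ and rescales at the end.
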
 

\begin{proof}[\textbf{Proof.}]
See the appendix.
\end{proof}

Heuristically, next we restrict our focus on the classical models and explore consequences of causality misspecification in quantile regressions.  

\subsection{Causal and noncausal models with Gaussian i.i.d. disturbances}

Suppose a causal AR(1) process $\left\{  y_{t}\right\}_{t=1}^T $, $y_{t}=\alpha+\beta y_{t-1}+\varepsilon_{t}$, with for instance $\left[  \alpha,\beta\right]  =[1,0.5]$, i.i.d. standard normal distributed $\left\{  \varepsilon_{t}\right\}  $ and $T=200.$ Figure~\ref{fig:series_G_C} displays a corresponding simulated series.
\begin{figure}[hptb]
\centering
\includegraphics[height=8cm, width=10cm]%
{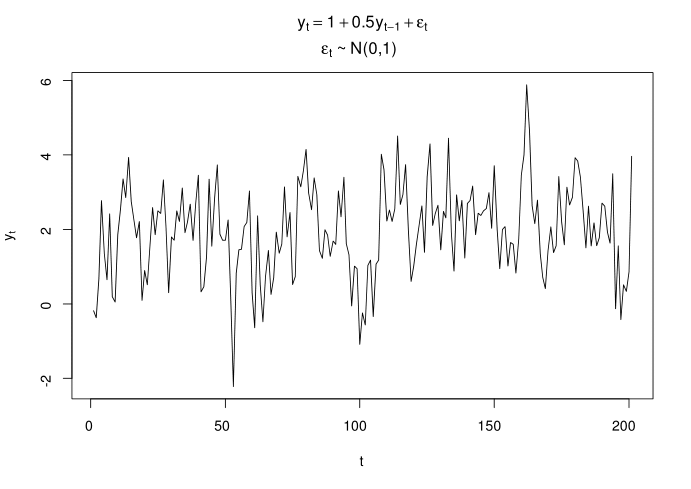}%
\caption{Simulation of a one-regime process with $N(0,1)$ innovations, $T=200$}%
\label{fig:series_G_C}
\end{figure}

The information displayed in Figure~\ref{fig:SRAR_G_C} is the SRAR$(\tau)$ of each candidate model along quantiles, indicating their goodness of fit. The two SRAR curves almost overlap at every quantile, which implies no discrimination between QCAR and QNCAR in Gaussian innovations, in line with results in the OLS case. The Gaussian distribution is indeed time reversible, weak and strict stationary. Its first two moments characterize the whole
distribution and consequently every quantile. Note that we obtain similar results for a stationary noncausal AR($p$) process with i.i.d. Gaussian $\left\{  \varepsilon_{t}\right\}  $.
The results are not reported to save space.%
\begin{figure}[hptb]
\centering
\includegraphics[height=8cm, width=10cm]%
{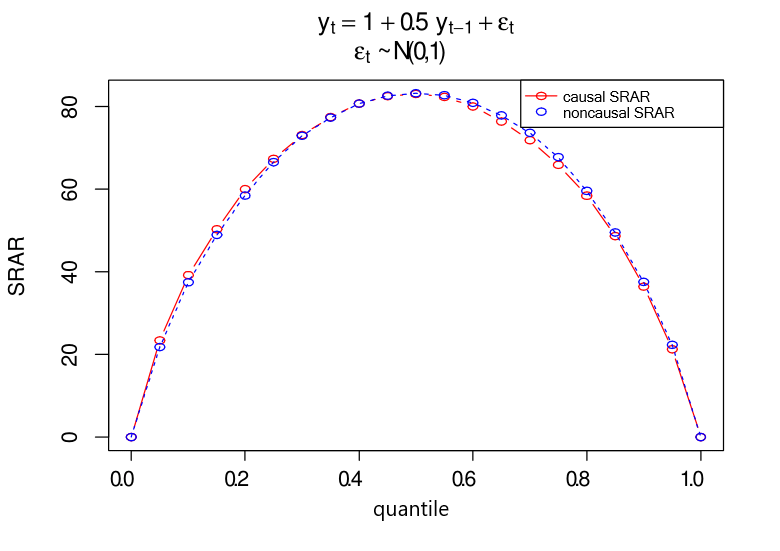}%
\caption{SRAR plot under an AR(1) with $N(0,1)$ innovations, $T=200$}
\label{fig:SRAR_G_C}
\end{figure}

\subsection{Causal and noncausal models with Student's $t$ distributed innovations}

Things become different if we depart from Gaussianity. Suppose now a causal AR(1)
process $y_{t}=\alpha+\beta y_{t-1}+\varepsilon_{t}$ with again $\left[\alpha,\beta\right]  =[1,0.5]$ but where $\left\{\varepsilon_{t} \right\}$
are i.i.d. Student's $t-$distributed with 2 degrees of freedom (hereafter using shorthand notation: $t(2)$). Figure~\ref{fig:series_t2_C}
depicts a simulation in this AR(1) with $T=200$. Applying QCAR and QNCAR respectively on this series results in the SRAR
curves displayed in Figure~\ref{fig:SRAR_t2_NC}. The distance between the two curves is obvious compared to the Gaussian case, favouring the causal specification at almost all quantiles. Figure~\ref{fig:SRAR_t1_NC} is the SRAR plot of a purely noncausal process with
i.i.d. Cauchy innovations. The noncausal
specification is preferred in the SRAR comparison.

It seems now that applying the SRAR comparison at one quantile, such as the median, is sufficient for model identification, but it is not true in general. In Section~\ref{sec:SRAR}, we will observe a crossing feature in SRAR plots, the true model even having higher SRAR values at certain quantiles than the misspecified model.
\begin{figure}[hptb]
\centering
\includegraphics[height=6cm, width=10cm]%
{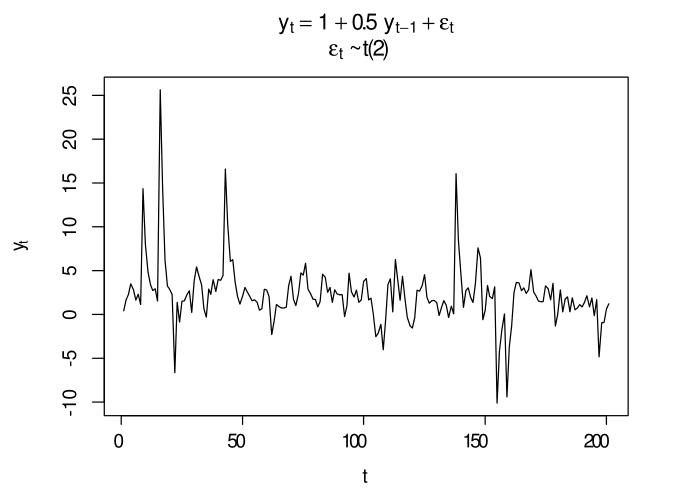}%
\caption{Simulation of an AR(1) with $t$(2) innovations, $T=200$}%
\label{fig:series_t2_C}
\end{figure}

\begin{figure}[hptb]
\centering
\includegraphics[height=6cm, width=10cm]%
{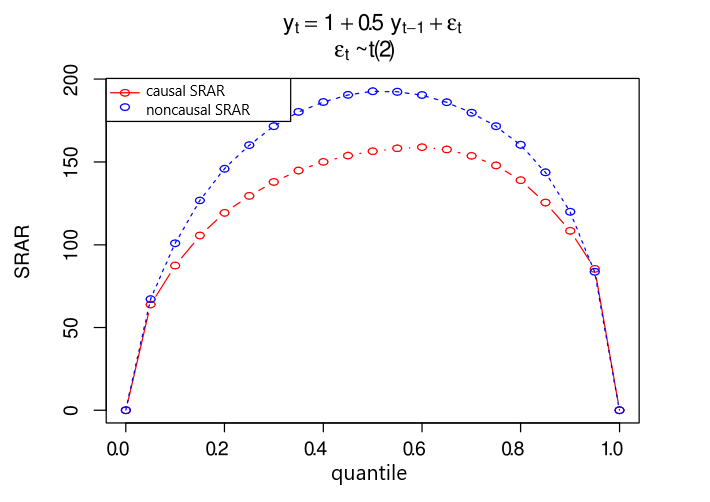}%
\caption{SRAR plot under an AR(1) with $t{(2)},$ $T=200$}%
\label{fig:SRAR_t2_NC}
\end{figure}

\begin{figure}[hptb]
\centering
\includegraphics[height=8cm, width=10cm]%
{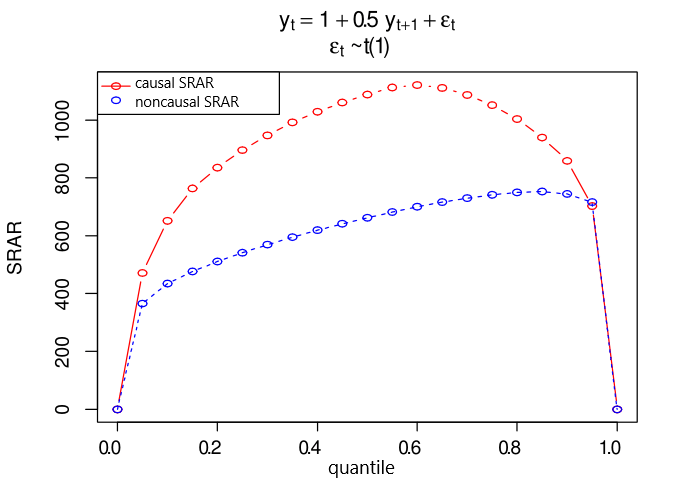}%
\caption{SRAR plot under a noncausal model with Cauchy innovations, $T=200$}%
\label{fig:SRAR_t1_NC}
\end{figure}

\begin{figure}[hptb]
\centering
\includegraphics[height=8cm, width=10cm]%
{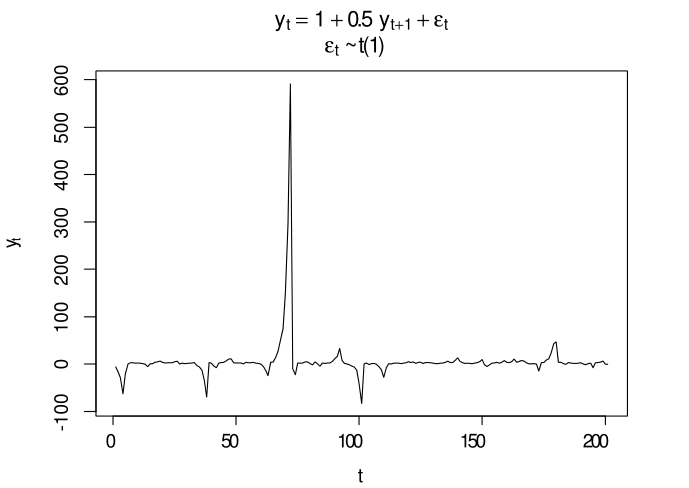}%
\caption{Simulation of a noncausal model with Cauchy innovations, $T=200$}%
\label{fig:series_t1_NC}
\end{figure}

So far we have applied QCAR and its extension QNCAR for purely causal or noncausal models with symmetrically i.i.d. innovation series. We show that a model selection by the SRAR comparison gives uniform decisions along quantiles. However, such a model selection is not always that clear in practice. In the empirical study, we will encounter a crossing phenomenon in SRAR plots. In the next section, we will present such a crossing phenomenon with some possible reasons, and propose a more robust model selection criterion called the aggregate SRAR.

\section{SRAR as a model selection criterion}
\label{sec:SRAR}
It is natural to think about SRAR as a model selection criterion since a lower SRAR means a better goodness of fit in quantile regressions. However, SRAR is a function of quantile, which raises a question on which quantile to be considered for model selection. It is empirically common to see a crossing feature of SRAR plots, which gives different model selections at certain quantiles and makes a selection unreliable if only one quantile is considered. In this section, we discuss this issue and propose a more robust model selection criterion based on aggregating SRARs.

\subsection{Crossing feature of SRAR plots}
\label{subsec:crossing_feature}
First let us see some possible model settings causing crossings in SRAR plots. The first case is linked to the existence of multi-regimes in coefficients.

Suppose a QNCAR($p$) process specified as follows: 
\begin{equation}
\left\{
\begin{array}
[c]{c}%
y_{t}=\beta_{1}y_{t+1} + F^{-1}\left( \tau_{t} \right)\text{ \ \ \ \ if }0\leq\tau_{t}\leq\tau^{\ast}\\
y_{t}=\beta_{2}y_{t+1} + F^{-1}\left( \tau_{t} \right)\text{ \ \ \ \ if }\tau^{\ast}<\tau
_{t} \leq 1
\end{array}
\right.  \label{eq:two_regime_QNCAR}%
\end{equation}
where $\left\{ \tau_{t}\right\} $ is a sequence of i.i.d. standard uniform random variables, equating $y_t$ with its $\tau_t\,$th conditional quantile if and only if the right-hand side is monotonically increasing in $\tau_t$. $F(\cdot)$ is the cumulative density function of the i.i.d. innovation process. There is a problem in using quantile autoregression to recover coefficients in this model if the monotonicity requirement of the right side in $\tau_t$ is violated. Because of the violation, this regime model at $\tau$-th regime is no longer in coincidence with its $\tau$-th conditional quantile. This makes the QNCAR unable to recover the true regime model. However, if the random coefficients are monotonically increasing in $\tau_t$, then by restricting to the non-negative region of $y_{t+1}$ (also see Fan and Fan, 2010) we force this regime model in regression to satisfy the monotonicity requirement without losing its characteristics. We can then obtain the estimation consistently of the true parameters in \eqref{eq:two_regime_QNCAR}. Such a restricted QCAR (or QNCAR) is called here restricted quantile causal autoregression (or restricted quantile noncausal autoregression, RQCAR or RQNCAR hereafter), is formulated as follows:
\begin{equation}
\hat{\boldsymbol{\theta}}(\tau ) = \argmin\limits_{\boldsymbol{\theta}\in
\mathbb{R}^{p+1}}\;\sum\limits_{t=1}^{T}I\left[ t\in\mathfrak{T}\right]\rho_{\tau}%
(y_{t}-\boldsymbol{x}_{t}^{\prime}\boldsymbol{\theta})
\label{eq:RQCAR}
\end{equation}
where $\mathfrak{T}$ is the set restricting the quantile regression on particular observations. In this paper, we restrict the QNCAR on non-negative regressors, i.e., $\mathfrak{T}= \left\{t: \boldsymbol{x}_{t}\geq \boldsymbol{0} \right\}$.
\\
Figure~\ref{fig:crossing_SRAR_CQAR} shows four SRAR curves estimated from QCAR, QNCAR, RQCAR and RQNCAR. We consider a time series $\left\{ y_t \right\}_{t=1}^{600}$ simulated from the model~\eqref{eq:two_regime_QNCAR} with $\tau^{\ast} = 0.7, \beta_1 = 0.2, \beta_2 = 0.8$
and i.i.d. innovation process following a $t(3)$, i.e., $F^{-1}(\cdot)= F^{-1}_{t(3)}(\cdot)$. 
\begin{figure}[hptb]
\centering
\includegraphics[height = 8cm, width=12cm]{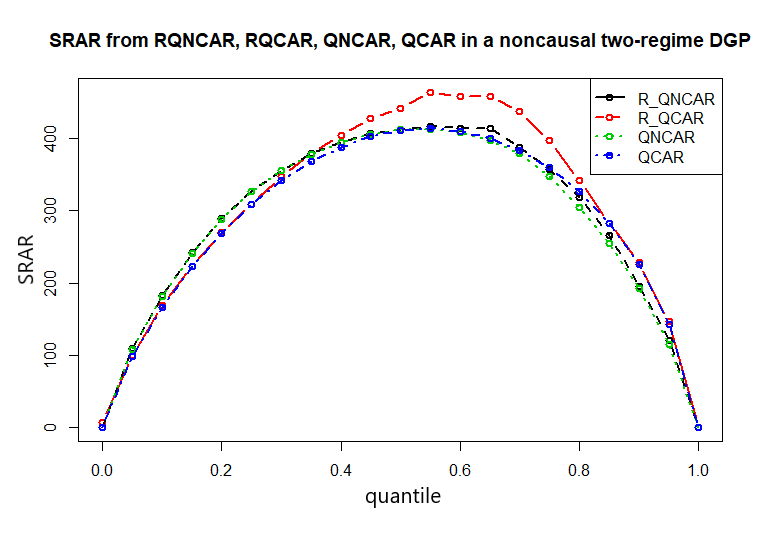}%
\caption{Crossing feature in a SRAR plot with restricted quantile autoregressions}
\label{fig:crossing_SRAR_CQAR}
\end{figure}
Figure~\ref{fig:crossing_SRAR_CQAR} illustrates such a crossing phenomenon in which the SRAR curve from a true model is not always lower than one from misspecification. Applying restriction helps to enlarge the SRAR difference between a true model and a misspecified time direction.

The second case we investigate is the presence of skewed distributed disturbances.

Let us consider a time series $\left\{ y_t \right\}$ following a purely noncausal AR($1$): $y_t = 0.8 y_{t+1} + \varepsilon_t$ with $\left\{\varepsilon_t \right\}$ i.i.d. in a demeaned skewed $t$ distribution with skewing parameter $\gamma=2$ and $v = 3$ degrees of freedom (hereafter $t(v,\gamma)$ is the shorthand notation for a skewed t-distribution). The probability density function of $t(v,\gamma)$ (see Francq and Zakoïan 2007) is defined as
\begin{equation}
\left\{
\begin{aligned}
f(x) & = \frac{2}{\gamma + \frac{1}{\gamma}}f_t(\gamma x) & \qquad \text{for}\; x<0	\\
f(x) & = \frac{2}{\gamma + \frac{1}{\gamma}}f_t(\frac{x}{\gamma}) & \qquad \text{for}\; x\geq 0
\end{aligned}
\right.
\label{eq:skewed_t_pdf}
\end{equation}
where $f_t(\cdot)$ is the probability density function of the symmetric $t(v)$ distribution. Figure~\ref{fig:crossing_SRAR_skewedT} shows four SRAR curves derived from the estimation of the QCAR, the QNCAR, the RQCAR and the RQNCAR respectively. The curves from the QNCAR and the RQNCAR almost overlap each other, which confirms our understanding that the monotonicity requirement is met in the true model. The estimations and the corresponding SRAR curves should be the same unless many observations are omitted by the restriction. On the other hand, the SRAR curve gets enlarged from the QCAR to the RQCAR, which is very reasonable as the feasible set in the QCAR is larger and the misspecification is not ensured to satisfy the monotonicity requirement. Again we see this crossing feature in the SRAR plot. Remarkably, the SRAR curve from a true model can be higher at certain quantiles than the one from a misspecified model. Consequently the SRAR comparison relying only on particular quantiles, such as the least absolute deviation (LAD) method for the median only, is not robust in general. Therefore, we propose a new model selection criterion in next subsection by including the information over all quantiles.
\begin{figure}[hptb]
\centering
\includegraphics[height = 6cm, width=10cm]{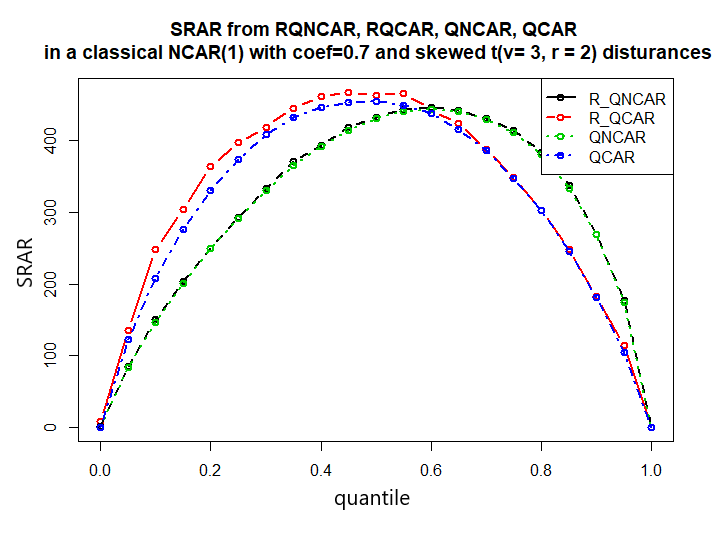}%
\caption{Crossing feature in a SRAR plot with a skewed distribution}
\label{fig:crossing_SRAR_skewedT}
\end{figure}

\subsection{The aggregate SRAR criterion}
Based on the same number of explanatory variables in QCAR and QNCAR with a fixed sample size in quantile regressions, the best model is supposed to exhibit the highest goodness of fit among candidate models. Similarly to the R-squared criterion in the OLS, when turning to quantile regressions, we are led to use a SRAR criterion for model selection. The aggregate SRAR is regarded as an overall performance of a candidate model over every quantile such as: 
\[
\text{aggregate SRAR} := \int_{0}^{1}SRAR(\tau)d\tau.
\]
There are many ways to calculate this integral. One way is to approximate the integral by the trapezoidal rule. Another way is to sum up SRARs over a fine enough quantile grid with equal weights. In other words, this aggregation is regarded as an average of performances (SRAR($\tau$), $\tau\in(0,1)$) of a candidate model. In practice, there is almost no difference in model selection between the two aggregation methods.

Performances of the SRAR model selection criteria in Monte Carlo simulations are reported in Table~\ref{tab:SRAR_compare}. It shows the frequencies with which we find the correct model based on the SRAR criterion per quantile and the aggregate SRAR criterion. The sample size $T$ is 200 and each reported number is based on 2000 Monte Carlo simulations. Columns of Table~\ref{tab:SRAR_compare} refer to as a particular distribution previously illustrated in this paper. As observed, the aggregate SRAR criterion performs very well even in crossing situations. The Gaussian distribution being weakly and strictly stationary we cannot obviously discriminate between causal and noncausal specifications leading to a frequency of around 50\% to detect the correct model.

\begin{table}[h]
\caption{Frequencies of selecting the correct model using the SRAR criteria}\vspace{5pt}%
\label{tab:SRAR_compare}%
\centering%
{\small
\begin{tabular}
[c]{lrrrrr}\hline
Quantiles & Gaussian  & $t(2)$ & $t(1)$  & two-regime  & $t(v=3,\gamma=2)$  \\
	 & (Fig.~\ref{fig:series_G_C})	& (Fig.~\ref{fig:series_t2_C}) 	& (Fig.~\ref{fig:series_t1_NC})	& (Fig.~\ref{fig:crossing_SRAR_CQAR})	& (Fig.~\ref{fig:crossing_SRAR_skewedT})	\\ \hline
0 	 			& 0.698 & 0.678 & 0.601 &	0.787 & 0.476\\
0.05 			& 0.516 & 0.416 & 0.653 &	0.044 & 1.000\\
0.10 			& 0.51  & 0.677 & 0.763 &	0.059 & 1.000\\
0.15 			& 0.519 & 0.858 & 0.841 &	0.095 & 1.000\\
0.20 			& 0.512 & 0.948 & 0.907 &	0.167 & 1.000\\
0.25 			& 0.513 & 0.981 & 0.947 &	0.305 & 1.000\\
0.30 			& 0.488 & 0.992 & 0.978 &	0.487 & 1.000\\
0.35 			& 0.487 & 0.998 & 0.996 &	0.654 & 1.000\\
0.40 			& 0.486 & 0.999 & 0.996 &	0.798 & 1.000\\
0.45 			& 0.487 & 1.000 & 0.996 &	0.892 & 1.000\\
0.50 			& 0.5 	& 1.000 & 0.995 &	0.950 & 1.000\\
0.55 			& 0.499 & 0.999 & 0.995 &	0.974 & 0.994\\
0.60 			& 0.492 & 0.999 & 0.995 &	0.988 & 0.533\\
0.65 			& 0.478 & 0.997 & 0.995 &	0.991 & 0.018\\
0.70 			& 0.467 & 0.994 & 0.979 &	0.996 & 0.001\\
0.75 			& 0.49  & 0.984 & 0.951 &	0.998 & 0.000\\
0.80 			& 0.493 & 0.954 & 0.903 &	0.999 & 0.000\\
0.85 			& 0.481 & 0.862 & 0.858 &	1.000 & 0.000\\
0.90 			& 0.469 & 0.72  & 0.791 &	1.000 & 0.000\\
0.95 			& 0.484 & 0.454 & 0.668 &	0.997 & 0.000\\
1 	 			& 0.653 & 0.58  & 0.595 &	0.780 & 0.420\\
aggregate SRAR 	& 0.483 & 0.998 & 0.995 &	0.995 & 0.999\\\hline
\end{tabular}
}
\end{table}

\subsection{Shape of SRAR curves}
By observing SRAR plots, we see that SRAR curves vary when the underling distribution varies. It is interesting to investigate the reasons. In this subsection, we will provide some insights on the slope and concavity of $\text{SRAR}_{y_t}(\tau, \hat{\boldsymbol{\theta}}(\tau))$ curves under assumptions (A1), (A2), (A3) and (A4). Since $\rho_{\tau}(y_{t}-\boldsymbol{x}_{t}^{\prime}\boldsymbol{\theta})$ is a continuous function in $\boldsymbol{\theta}\in \mathbb{R}^{(p+1)}$, by the continuous mapping theorem and $\hat{\boldsymbol{\theta}}(\tau)) \overset{p}{\rightarrow}  \boldsymbol{\phi}_{\tau}$, we know that
$$
\rho_{\tau}(y_{t}-\boldsymbol{x}_{t}^{\prime}\hat{\boldsymbol{\theta}})\overset{p}{\rightarrow}\rho_{\tau}(y_{t}-\boldsymbol{x}_{t}^{\prime}\boldsymbol{\phi}_{\tau}).
$$
We also know that
$$
\rho_{\tau}(y_{t}-\boldsymbol{x}_{t}^{\prime}\boldsymbol{\phi}_{\tau}) = \rho_{\tau}(\varepsilon_t - F^{-1}(\tau)).
$$ 
Therefore instead of directly deriving the shape of a $\text{SRAR}_{y_t}(\tau, \hat{\boldsymbol{\theta}}(\tau))$ curve, we look at the properties of its intrinsic curve $\text{SRAR}_{\varepsilon_t}(\tau, F^{-1}(\tau))$. We derive the first and second order derivatives of $\text{SRAR}_{\varepsilon_t}(\tau, F^{-1}(\tau))$ with respect to $\tau$ in order to determine the shape of $\text{SRAR}_{y_t}(\tau, \hat{\boldsymbol{\theta}}(\tau))$.

\subsubsection{The slope property}
One major difference between SRAR curves in a plot is their slopes. We can compute the first-order derivative of SRAR with respect to $\tau$ if the derivative exists. Under the following assumption:
\vspace*{-1.5mm}
\begin{enumerate}
\item[(A7):] The inverse distribution function $F^{-1}(\cdot)$ of innovation $\varepsilon_t$ is continuous and differentiable on $(0,1)$ to the second order;
\end{enumerate}
\vspace*{-1.5mm}
we can then take the first-order derivative of $\text{SRAR}_{\varepsilon_t}(\tau, F^{-1}(\tau))$ with respect to $\tau$. 
\\
Suppose $0< \tau < \tau + \Delta\tau < 1, \Delta\tau > 0$ and denote $\Delta F^{-1}(\tau) := F^{-1}(\tau + \Delta\tau) - F^{-1}(\tau)$.
\begin{equation}
\begin{aligned}
&\text{SRAR}_{\varepsilon_t}(\tau + \Delta\tau  , F^{-1}(\tau + \Delta\tau)) - \text{SRAR}_{\varepsilon_t}(\tau, F^{-1}(\tau)) 
					\\
					& = \sum^{T}_{t=1}\biggl(\rho_{\tau + \Delta\tau}\left( \varepsilon_t - F^{-1}(\tau + \Delta\tau)\right)  - \rho_{\tau}\left( \varepsilon_t - F^{-1}(\tau) \right)\biggr)
					\\
					& = \sum^{T}_{t=1}\biggl(
					 \left( \varepsilon_t - F^{-1}(\tau + \Delta\tau)\right)\left(\tau + \Delta\tau - \mathds{1}_{\left\{ \varepsilon_t - F^{-1}(\tau + \Delta\tau) \leq 0 \right\}} \right) \biggr. 
				 	 - \biggl.
					  \left( \varepsilon_t - F^{-1}(\tau)\right)\left(\tau  - \mathds{1}_{\left\{ \varepsilon_t - F^{-1}(\tau) \leq 0  \right\} } \right) 			
					\biggr)			\\
					& = \sum^{T}_{t=1}\biggl(
					  \varepsilon_t \left( \Delta\tau - \mathds{1}_{\left\{ F^{-1}(\tau ) < \varepsilon_t \leq F^{-1}(\tau + \Delta\tau) \right\} } \right)			  
					   + \tau\left( F^{-1}(\tau ) - F^{-1}(\tau+\Delta\tau) \right)							 - \Delta\tau\,F^{-1}(\tau+\Delta\tau)				\\
					  & + F^{-1}(\tau+\Delta\tau)\,\mathds{1}_{\left\{\varepsilon_t \leq F^{-1}(\tau + \Delta\tau)\right\} } 
					   - F^{-1}(\tau)\,\mathds{1}_{\left\{\varepsilon_t \leq F^{-1}(\tau ) \right\}}
					\biggr)		\\
					& = \sum^{T}_{t=1}\biggl(
					  \Delta\tau \left( \varepsilon_t - F^{-1}(\tau+\Delta\tau) \right)
					   + \left( F^{-1}(\tau+\Delta\tau) - F^{-1}(\tau) \right) \, \left( \mathds{1}_{\left\{\varepsilon_t \leq F^{-1}(\tau + \Delta\tau)\right\} } - \tau \right) 
					   \\
					   & + \mathds{1}_{\left\{ F^{-1}(\tau ) < \varepsilon_t \leq F^{-1}(\tau + \Delta\tau) \right\} } \left( F^{-1}(\tau) - \varepsilon_t \right)
					\biggr).
\end{aligned}
\end{equation}
Divide the above difference by $\Delta\tau$, and take the limit $\Delta\tau\downarrow 0$. It gives us
\begin{equation}
\begin{aligned}
\lim\limits_{\Delta\tau\downarrow 0}& \frac{\text{SRAR}_{\varepsilon_t}(\tau + \Delta\tau  , F^{-1}(\tau + \Delta\tau)) - \text{SRAR}_{\varepsilon_t}(\tau, F^{-1}(\tau)) }{\Delta\tau} 
						\\
					& = \sum^{T}_{t=1}\biggl(
					   \varepsilon_t - F^{-1}(\tau)
					   + \frac{d F^{-1}(\tau)}{d \tau} \, \left( \mathds{1}_{\left\{\varepsilon_t \leq F^{-1}(\tau)\right\} } - \tau \right) 
					\biggr),
\end{aligned}
\end{equation}
because
\begin{equation}
\label{eq:slope_formula}
\begin{aligned}
\lim\limits_{\Delta\tau\downarrow 0}& \frac{\Delta\tau \left( \varepsilon_t - F^{-1}(\tau+\Delta\tau) \right)}{\Delta\tau} 
					 =  \varepsilon_t - F^{-1}(\tau),
					 \\
\lim\limits_{\Delta\tau\downarrow 0}& \frac{\left( F^{-1}(\tau+\Delta\tau) - F^{-1}(\tau) \right) \, \left( \mathds{1}_{\left\{\varepsilon_t \leq F^{-1}(\tau + \Delta\tau)\right\} } - \tau \right) }{\Delta\tau} 
					 =  \frac{d F^{-1}(\tau)}{d \tau} \, \left( \mathds{1}_{\left\{\varepsilon_t \leq F^{-1}(\tau)\right\} } - \tau \right) ,	
						\\
						\lim\limits_{\Delta\tau\downarrow 0}& \frac{\mathds{1}_{\left\{ F^{-1}(\tau ) < \varepsilon_t \leq F^{-1}(\tau + \Delta\tau) \right\} } \left( F^{-1}(\tau) - \varepsilon_t \right) }{\Delta\tau} 
					 =  0 .				 				 
\end{aligned}
\end{equation}
The last line is from 
\begin{equation}
\left\{
\begin{aligned}
\mathds{1}_{\left\{ F^{-1}(\tau ) < \varepsilon_t \leq F^{-1}(\tau + \Delta\tau) \right\} } \left( F^{-1}(\tau) - \varepsilon_t \right)  = 0, \qquad 
				&\text{when}\quad \varepsilon_t \not\in \bigl( F^{-1}(\tau ), F^{-1}(\tau + \Delta\tau) \bigr]; 
						\\
 \left( F^{-1}(\tau) - F^{-1}(\tau + \Delta\tau) \right) \leq \left( F^{-1}(\tau) - \varepsilon_t \right)  < 0  , \quad  & \text{when}\quad \varepsilon_t \in \bigl( F^{-1}(\tau ), F^{-1}(\tau + \Delta\tau) \bigr];
\end{aligned}
\right.
\end{equation}
and
\begin{equation}
				 	 0=\mathds{1}_{\left\{ F^{-1}(\tau ) < \varepsilon_t \leq F^{-1}(\tau) \right\} }\frac{d F^{-1}(\tau)}{d \tau} 
				 	 \leq  
				 	 \lim\limits_{\Delta\tau\downarrow 0}  \frac{\mathds{1}_{\left\{ F^{-1}(\tau ) < \varepsilon_t \leq F^{-1}(\tau + \Delta\tau) \right\} } \left( F^{-1}(\tau) - \varepsilon_t \right) }{\Delta\tau}  
				 	 \leq 0.
\end{equation}
In analogue, the left-handed limit $\lim\limits_{\Delta\tau\uparrow 0}  \frac{\text{SRAR}_{\varepsilon_t}(\tau + \Delta\tau  , F^{-1}(\tau + \Delta\tau)) - \text{SRAR}_{\varepsilon_t}(\tau, F^{-1}(\tau)) }{\Delta\tau}$ gives the same result. Therefore, we have the first-order derivative as below.
\begin{equation}
				\frac{d\, \text{SRAR}_{\varepsilon_t}(\tau, F^{-1}(\tau) )}{d \tau}
					 = \sum^{T}_{t=1}  \biggl(
					   \varepsilon_t - F^{-1}(\tau)
					   + \frac{d F^{-1}(\tau)}{d \tau} \, \left( \mathds{1}_{\left\{\varepsilon_t \leq F^{-1}(\tau)\right\} } - \tau \right) 
					\biggr).
\end{equation}
To emphasize this result, we take expectation such that
\begin{equation}
\mathbb{E}\left[ \frac{d\, \text{SRAR}_{\varepsilon_t}(\tau, F^{-1}(\tau) )}{d\, \tau} \right] = T \left( \mathbb{E}\left[ \varepsilon_t \right] - \,F^{-1}(\tau)	\right),
\label{eq:E[dSRARdtau]}				
\end{equation}
when $\mathbb{E}\left[ \varepsilon_t \right]$ exists. In practice, we are not strict with $\mathbb{E}\left[ \varepsilon_t \right]< \infty $ since the mean of an i.i.d. $\left\{ \varepsilon_t \right\}_{t=1}^T$ can be estimated empirically to replace $\mathbb{E}\left[ \varepsilon_t \right]$ in~\eqref{eq:E[dSRARdtau]} without affecting other terms.

Now we have the expectation of $ \frac{d\, \text{SRAR}_{\varepsilon_t}(\tau, F^{-1}(\tau) )}{d\, \tau}$ which can be regarded as the underlying guideline for the slope of a SRAR curve. Before interpreting this result, let us derive the second-order derivative of $\text{SRAR}_{\varepsilon_t}(\tau, F^{-1}(\tau))$ with respect to $\tau$ and make an interpretation together.

\subsubsection{The concave property}
One empirically observed property of SRAR curves is their concavity which can be explained through the second-order derivative of $\text{SRAR}_{\varepsilon_t}(\tau, F^{-1}(\tau))$ with respect to $\tau$ under assumptions (A1), (A2), (A3), (A4) and (A7). Suppose $0< \tau < \tau + \Delta\tau < 1, \Delta\tau > 0$.
\begin{equation}
\begin{aligned}
& \quad \Delta^2\,\text{SRAR}_{\varepsilon_t}(\tau, F^{-1}(\tau)) := \\
& 
\text{SRAR}_{\varepsilon_t}(\tau + \Delta\tau  , F^{-1}(\tau + \Delta\tau)) - 2\,\text{SRAR}_{\varepsilon_t}(\tau, F^{-1}(\tau)) + \text{SRAR}_{\varepsilon_t}(\tau - \Delta\tau  , F^{-1}(\tau - \Delta\tau))
					\\
					& = \sum^{T}_{t=1}\biggl(
					  \left(\varepsilon_t -  F^{-1}(\tau) \right) \left(
					   \mathds{1}_{\left\{ F^{-1}(\tau - \Delta\tau ) < \varepsilon_t \leq F^{-1}(\tau) \right\} } - \mathds{1}_{\left\{ F^{-1}(\tau ) < \varepsilon_t \leq F^{-1}(\tau + \Delta\tau) \right\} }  \right) 
					   \\
					   & + \tau \left( 2\,F^{-1}(\tau) - F^{-1}(\tau + \Delta\tau) - F^{-1}(\tau - \Delta\tau) \right) 
					   + \Delta\tau \left( F^{-1}(\tau - \Delta\tau) -  F^{-1}(\tau+ \Delta\tau)\right)
						\\					   
					   & +  \left( F^{-1}(\tau + \Delta\tau) + F^{-1}(\tau - \Delta\tau) - 2\,F^{-1}(\tau) \right) \mathds{1}_{\left\{  \varepsilon_t \leq F^{-1}(\tau - \Delta\tau )  \right\} }
					    \\
					    & + \left( F^{-1}(\tau + \Delta\tau) -  F^{-1}(\tau )\right)\mathds{1}_{\left\{ F^{-1}(\tau ) < \varepsilon_t \leq F^{-1}(\tau + \Delta\tau ) \right\} }
					\biggr)					
					.
\end{aligned}
\end{equation}
Divide the above second order central difference by $\Delta\tau^2$, and take the limit $\Delta\tau\downarrow 0$. It gives us
\begin{equation}
\label{eq:concave_formula}
\begin{aligned}
\frac{d^2\, \text{SRAR}_{\varepsilon_t}(\tau, F^{-1}(\tau) )}{d \tau^2} & = \lim\limits_{\Delta\tau\downarrow 0} \frac{\Delta^2\,\text{SRAR}_{\varepsilon_t}(\tau, F^{-1}(\tau)) }{\Delta\tau^2} 
						\\
					& = \sum^{T}_{t=1}\biggl(
					   \frac{d^2 F^{-1}(\tau)}{d \tau^2} \, \left( \mathds{1}_{\left\{\varepsilon_t \leq F^{-1}(\tau)\right\} } - \tau \right) - 2\,\frac{d F^{-1}(\tau)}{d \tau}
					\biggr),
\end{aligned}
\end{equation}
the last line of which is obtained similarly to \eqref{eq:slope_formula}. To interpret this result, we take expectation and get the following: 
\begin{equation}
\mathbb{E}\left[ \frac{d^2\, \text{SRAR}_{\varepsilon_t}(\tau, F^{-1}(\tau) )}{d\, \tau^2} \right] = - 2\,\frac{d F^{-1}(\tau)}{d \tau}\,T < 0.		
\end{equation}
where the inequality holds with probability one since $f(\epsilon)>0$ with probability one in the assumption (A1).  
Now we have the expectation of $ \frac{d^2\, \text{SRAR}_{\varepsilon_t}(\tau, F^{-1}(\tau) )}{d\, \tau^2}$ which can be regarded as the underlying guideline for the concavity of a SRAR curve. Together with the slope information, it implies that SRAR curves are always in arch shapes, going upward and then downward, with a peak point at $\mathbb{E}\left[ \varepsilon_t \right] = F^{-1}(\tau)$. We can also know the skewness of $\varepsilon_t$ from the location of the peak point: $\varepsilon_t$ is left-skewed when the SRAR curve reaches its peak in the region $\tau < 0.5$, or right-skewed when the peak in $\tau > 0.5$. If $\varepsilon_t$ is symmetrically distributed, its SRAR curve is symmetric, and vice versa.

\section{Binding functions} 
\label{sec:binding_function}
Plotting SRAR is a way to present the goodness of fit in quantile regressions for each candidate model. Quantile regressions are the path to get residuals for SRAR calculation. As we know and provide unbiased consistent estimation for true models. To study the estimation in misspecification we adopt the concept of binding function (Dhaene, Gourieroux and Scaillet, 1998). Binding function is defined as a mapping from coefficients in the true model to pseudo-true coefficients in a misspecified model.

The estimator of a pseudo-true coefficient in quantile regression for a misspecified QCAR($p$) or QNCAR($p$) converges to a limiting value which is characterized into the binding function. It is difficult to derive the binding functions explicitly in a general case so that they are studied by means of simulations~(see Gouri\'{e}roux and Jasiak, 2017).   
Suppose a noncausal AR(1): $y_t = \pi_1 y_{t+1} + \varepsilon_t$, with $\left\{ \varepsilon_t\right\}$ i.i.d. $t(\nu)$ for $v=1, 3, 5 \,\,\text{and}\,\, 10$. It is observed that the binding function in the misspecified QCAR(1) varies with two factors: (i) the distribution of $\varepsilon_t$ and (ii) the distance function in regression which is the check function $\rho_{\tau}(\cdot)$ in quantile regression. Figure~\ref{fig:binding_function_tv_tau05}, Figure~\ref{fig:binding_function_tv_tau03} and Figure~\ref{fig:binding_function_tv_tau01} illustrate the effect of those factors. Each point is an average value of estimates based on 1000 simulations and 600 observations. Since $t(\nu)$ is symmetric, the estimation results are in the same pattern for negative true coefficient region and $\left( 1 - \tau\right)$th-quantile regression as in these three figures. Sometimes the binding function is not injective, which is evidenced in Figure~\ref{fig:binding_function_tv_tau05} and Figure~\ref{fig:binding_function_tv_tau03} for small absolute true coefficients. The non-injectivity of the binding function for Cauchy distributed innovations is also illustrated in Gouri\'{e}roux and Jasiak (2017) result that disables encompassing tests. On the other hand, we see that on Figure~\ref{fig:binding_function_tv_tau01} the injectivity of binding functions seems recovered at $\tau=10\%$. In the case of Cauchy distributed innovations, there are no binding functions from extreme quantile regressions like $0.1\,$th- or $0.9\,$th-quantile regression because the estimate is not convergent. Although a value for $\pi_1\in(0,1)$ is plotted in Figure~\ref{fig:binding_function_tv_tau01}, it is just the average of binding function estimates for $\pi_1$ for illustration. 
%

\begin{figure}[hptb]
\centering
\includegraphics[height=7cm, width=10cm]{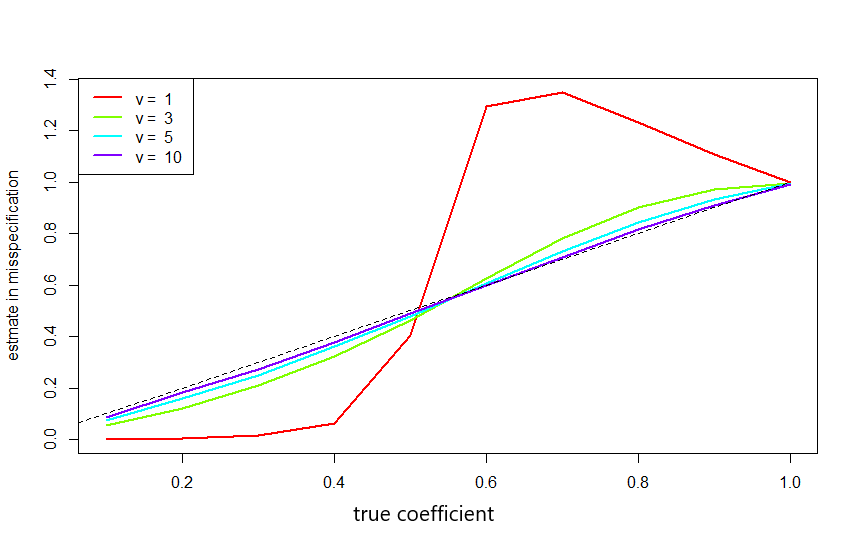} 
\caption{Binding function for a misspecified QCAR(1) in $0.5$th-quantile regression}%
\label{fig:binding_function_tv_tau05}
\end{figure}

\begin{figure}[hptb]
\centering
\includegraphics[height=7cm, width=10cm]%
{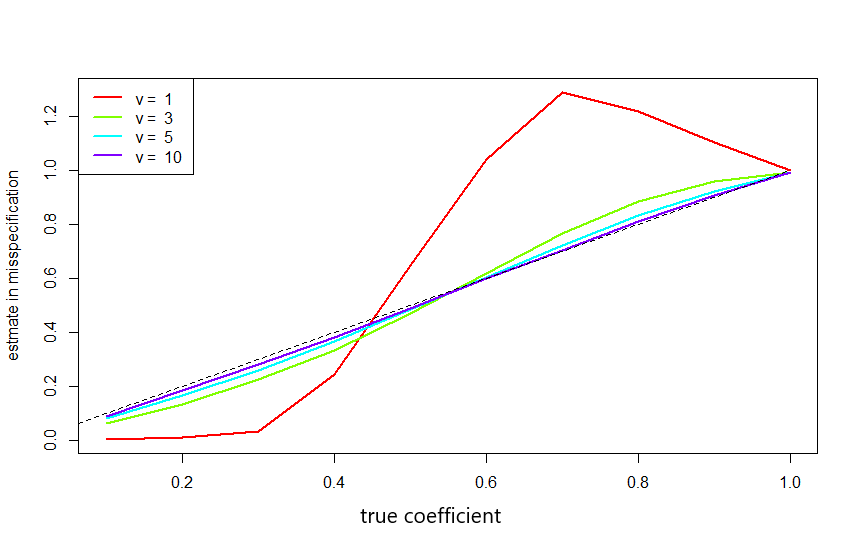}%
\caption{Binding function for a misspecified QCAR(1) in $0.3$th-quantile regression}%
\label{fig:binding_function_tv_tau03}
\end{figure}

\begin{figure}[hptb]
\centering
\includegraphics[height=7cm, width=10cm]%
{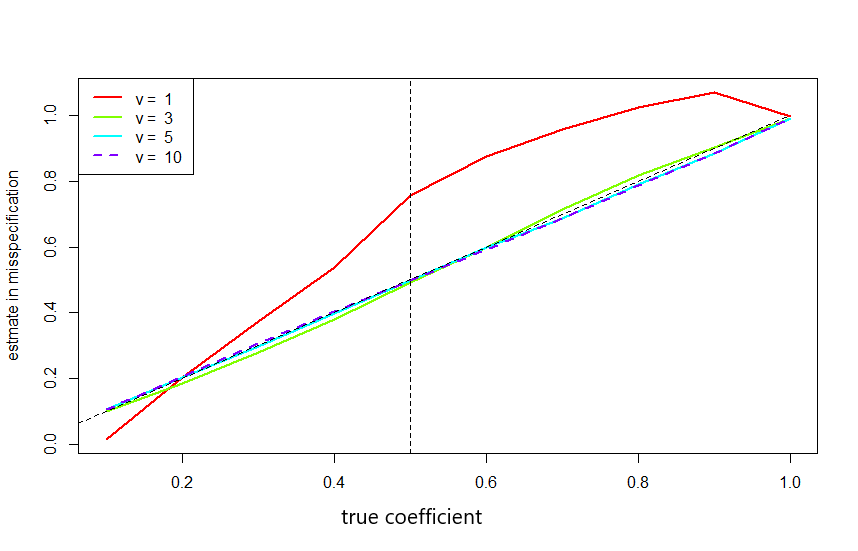}%
\caption{Binding function for a misspecified QCAR(1) in $0.1$th-quantile regression}%
\label{fig:binding_function_tv_tau01}
\end{figure}


\section{Modelling hyperinflation in Latin America}
\label{sec:empirical_analysis}

\subsection{The model specification}

The motivation of our empirical analysis comes from the rational expectation
(RE) hyperinflation model originally proposed by Cagan (1956) and investigated
by several authors (see e.g. Adam and Szafarz, 1992; Broze and Szafarz, 1985).
We follow Broze and Szafarz (1985) notations with
\begin{equation}
m_{t}^{d}=\alpha p_{t}+\beta E(p_{t+1}|I_{t})+x_{t}.\label{cagan}%
\end{equation}
In (\ref{cagan}), $m_{t}^{d}$ and $p_{t}$ respectively denote the logarithms
of money demand and price, $x_{t}$ is the disturbance term summarizing the
impact of exogenous factors. $E(p_{t+1}|I_{t})$ is the rational expectation,
when it is equal to conditional expectation, of $p_{t+1}$ at time $t$ based on the information set $I_{t}$. Assuming that the money supply $m_{t}^{s}=z_{t}$
is exogenous, the equilibrium $m_{t}^{d}=m_{t}^{s}$ provides the following
equation for prices
\begin{align*}
p_{t} &  =-\frac{\beta}{\alpha}[E(p_{t+1}|I_{t})]+\frac{z_{t}-x_{t}}{\alpha
},\\
&  =\phi\lbrack E(p_{t+1}|I_{t})]+u_{t}.
\end{align*}
Broze and Szafarz (1985) show that a forward-looking recursive solution of
this model exists when $x_{t}$ is stationary and $|\phi|<1.$ The deviation
from that solution is called the bubble $B_{t}$ with $p_{t}=\sum_{i=0}%
^{\infty}\phi^{i}E(u_{t+i}|I_{t})]+B_{t}.$ Finding conditions under which this
process has rational expectation equilibria (forward and or backward looking)
is out of the scope of our paper. We only use this framework to illustrate the
interest of economists for models with leads components. Under a perfect
foresight scheme $E(p_{t+1}|I_{t})=p_{t+1}$ we obtain the purely noncausal model
\begin{equation}
p_{t}=\phi p_{t+1}+\tilde{\varepsilon}_{t},\label{purenoncausal}%
\end{equation}
with $\tilde{\varepsilon}_{t}=u_{t}.$ In the more general setting, for
instance when $E(p_{t+1}|I_{t})=p_{t+1}+v_{t}$ with $v_{t}$ a martingale
difference, the new disturbance term is $\tilde{\varepsilon}_{t}=v_{t}+u_{t}.$
Empirically, a specification with one lead only might be too restrictive to
capture the underlying dynamics of the observed variables. We consequently
depart from the theoretical model proposed above and we consider empirical
specifications with more leads or lags. Lanne and Luoto (2013, 2017) and Hecq
et al. (2017) in the context of the new hybrid Keynesian Phillips curve assume
for instance that $\tilde{\varepsilon}_{t}$ is a MAR($r-1,s-1$) process such
as
\begin{equation}
\rho(L)\pi(L^{-1})\tilde{\varepsilon}_{t}=c+\varepsilon_{t},\label{MAR dist}%
\end{equation}
where $\varepsilon_{t}$ is $iid$ and $c$ an intercept term. Inserting
(\ref{MAR dist}) in (\ref{purenoncausal}) we observe that if
$\tilde{\varepsilon}_{t}$ is a purely noncausal model (i.e. a MAR($0,s-1)$
with $\rho(L)=1),$ we obtain a noncausal MAR($0,s)$ motion for prices
\begin{align*}
(1-\phi L^{-1})p_{t} &  =\pi(L^{-1})^{-1}(c+\varepsilon_{t}),\\
(1-\phi L^{-1})(1-\pi_{1}L^{-1}-...-\pi_{s-1}L^{-(s-1)})p_{t} &
=c+\varepsilon_{t},
\end{align*}
We would obtain a mixed causal and noncausal model if $\rho(L)\neq1.$ Our
guess is that the same specification might in some circumstances empirically
(although not mathematically as the lag polynomial does not annihilate the
lead polynomial) gives rise to a purely causal model in small samples when the
autoregressive part dominates the lead component.

\subsection{The data and unit root testing}

We consider seasonally unadjusted quarterly Consumer Price Index (CPI) series
for four Latin American countries: Brazil, Mexico, Costa Rica and Chile.
Monthly raw price series are downloaded at the OECD database for the
largest span available (in September 2018). Despite the fact that quarterly
data are directly available at OECD, we do not consider those series as they
are computed from the unweighted average over three months of the
corresponding quarters. Hence, these data are constructed using a linear
filter, leading to undesirable properties for the detection of mixed causal
and noncausal models (see Hecq, Telg and Lieb, 2017 on this specific issue).
As a consequence, we use quarterly data computed by point-in-time sampling
from monthly variables. The first observation is 1969Q1 for Mexico, 1970Q1 for
Chile, 1976Q1 for Costa Rica and 1979Q4 for Brazil. Our last observation is
2018Q2 for every series. We do not use monthly data in this paper as monthly
inflation series required a very large number of lags to capture their dynamic
feature. Moreover, the detection of seasonal unit roots in the level of
monthly price series was quite difficult.

Applying seasonal unit root tests (here HEGY tests, see Hylleberg et al.,
1990) with a constant, a linear trend and deterministic seasonal dummies, we reject
(see Table~2 in which a * denotes a rejection of the null unit root
hypothesis at a specific frequency corresponding to 5\% significance level) the null of
seasonal unit roots in each series whereas we do not reject the null of a unit
root at the zero frequency. The number of lags of the dependent variable used
to whiten for the presence of autocorrelation is chosen by AIC. From these
results we compute quarterly inflation rates for the four countries in annualized rate, i.e. $\Delta\ln P_{t}^{i}\times400.$ Next we carry out a
regression of $\Delta\ln P_{t}^{i}\times400$ on seasonal dummies to capture
the potential presence of deterministic seasonality. The null of no
deterministic seasonality is not rejected for the four series. Figure~\ref{fig:QIR_tsplot_4countries} displays quarterly inflation rates and it illustrates the huge inflation
episodes that the countries had faced. Among the four inflation rates, Brazil
and Mexico show the typical pattern closer to the intuitive notion of what a
speculative bubble is, namely a rapid increase of the series until a maximum
value is reached before the bubble bursts.%
\begin{figure}[hptb]
\centering
\includegraphics[scale=0.6]{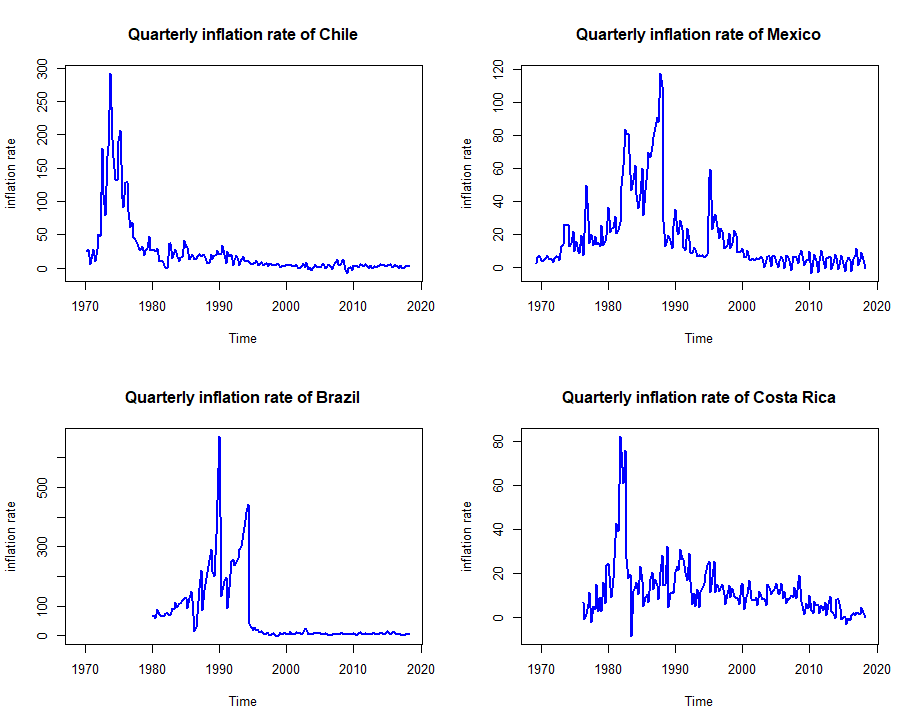} 
\caption{Quarterly inflation rate series plot for 4 Latin American countries}%
\label{fig:QIR_tsplot_4countries}
\end{figure}

\begin{table}
\centering
\label{tab:unitroottest_HEGY}
\caption{Seasonal HEGY unit root tests in the log levels of
prices}
\begin{tabular}
[c]{|lccc|c|}\hline\hline
Country & H$_{0}:\pi_{1}=0$ & H$_{0}:\pi_{2}=0$ & H$_{0}:\pi_{3}=\pi_{4}=0$ &
Sample\\\hline
$\ln P_{t}^{Bra}$ & $-1.39$ & $-5.75\ast$ & $48.28\ast$ & $1979Q4-2018Q2$\\
$\ln P_{t}^{Chi}$ & $-2.98$ & $-6.32\ast$ & $20.13\ast$ & $1970Q1-2018Q2$\\
$\ln P_{t}^{Costa}$ & $-1.80$ & $-4.23\ast$ & $7.81\ast$ & $1976Q1-2018Q2$\\
$\ln P_{t}^{Mex}$ & $-0.88$ & $-11.92\ast$ & $60.10\ast$ & $1969Q1-2018Q2$%
\\\hline
\end{tabular}
\end{table}

\subsection{Empirical findings and identification of noncausal models}

Table~3 reports for each quarterly inflation rates the autoregressive model
obtained using the Hannan-Quinn information criterion. Given our results on
the binding function (see also Gouri\'{e}roux and Jasiak, 2017) it is safer to
determine the pseudo-true autoregressive lag length using such an OLS approach
than using quantile regressions or using maximum likelihood method. Indeed
there is the risk that a regression in direct time from a noncausal DGP
provides an underestimation of the lag order for some distributions (e.g. the
Cauchy) and some values of the parameters. 

Estimating autoregressive univariate models gives the lag length range from $p=1$
for Brazil to $p=7$ for the Chilean inflation rate. The $p-values$ of the
Breush-Pagan LM test (see column labeled $LM[1-2])$ for the null of
no-autocorrelation after having included those lags show that we do not reject
the null in every four cases. On the other hand, we reject\ the null of
normality (Jarque-Bera test) in the disturbances of each series. We should
consequently be able to identify causal from noncausal models. From columns
$skew.$ and $kurt.$ it emerges that the residuals are skewed to the left for
Brazil and Mexico and skewed to the right for Chile and Costa Rica. Heavy
tails are present in each series. At a 5\% significance level we reject the
null of no ARCH (see column $ARCH[1-2])$ for Brazil and Mexico. Gouri\'{e}roux
and Zakoian (2017) have derived the closed form conditional moments of a
misspecified causal model obtained from a purely noncausal process with alpha
stable disturbances. They show that the conditional mean (in direct time) is a
random walk with a time varying conditional variance in the Cauchy case. This
result would maybe favour the presence of a purely noncausal specification for
Brazil and Mexico as the null of no ARCH is rejected. But this assertion must
be carefully evaluated and tested, for instance using our comparison of
quantile autoregressions in direct and reverse time. The results by the Q(N)CAR are reported in Table~4, and the RQ(N)CAR produces the same results. Each cell of Table~4 provides the selection frequency of MAR($0,p$) or
MAR$(p,0$) identified by the SRAR at quantiles 0.1, 0.3, 0.5, 0.7, 0.9 as well as the aggregated SRAR. Figure~\ref{fig:SRAR_QIR_4countries} displays the SRAR curves from 0.05th-quantile to 0.95th-quantile by the Q(N)CAR for the four economies respectively, similarly to the ones by the RQ(N)CAR with restriction on non-negative regressors. As observed, the crossing feature appears in the SRAR plots. Especially in the SRAR plot for Brazil, it is hard to trust a model from evidence at single quantiles. However, the aggregate SRAR criterion comes to help for this situation from an overall perspective. We conclude that Brazil, Mexico and Costa Rica are better characterized as being purely noncausal while Chile being purely noncausal according to the aggregate SRAR criterion.

\begin{figure}[hptb]
\centering
\includegraphics[scale=0.6]{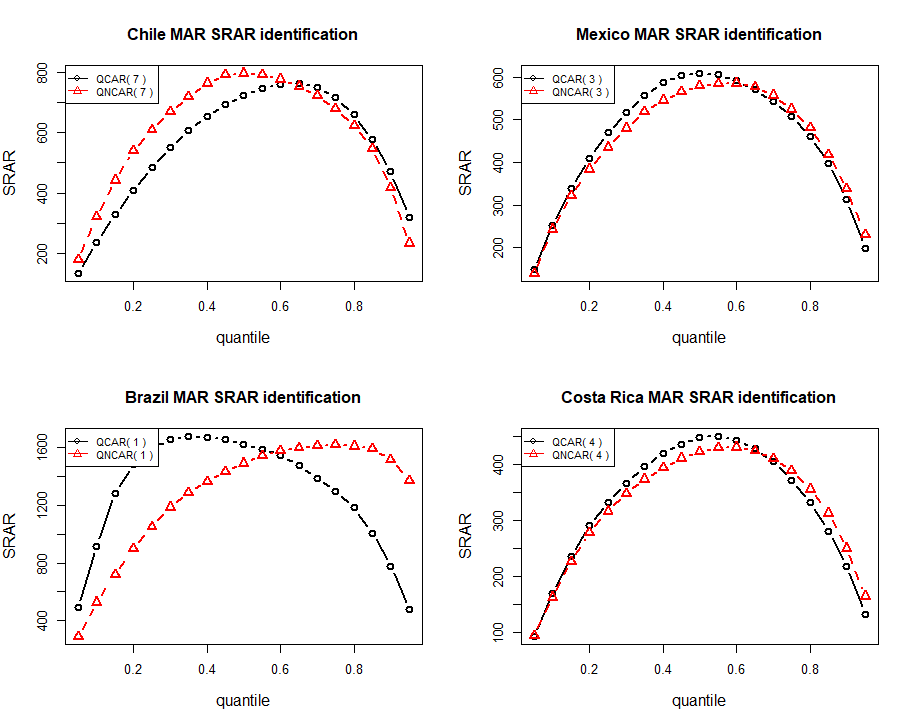} 
\caption{SRAR plots of the inflation rates of four Latin American countries respectively}%
\label{fig:SRAR_QIR_4countries}
\end{figure}

%
\begin{table}
\label{tab:QIR_descr_stat}
\centering
\caption{Descriptive statistics for quarterly inflation rates}
\begin{tabular}
[c]{|lcccccc|}\hline\hline
Country & $HQ$ & $BJ$ & $skew.$ & $kurt.$ & $LM[1-2]$ & $ARCH[1-2]$\\\hline
$\Delta\ln P_{t}^{Bra}$ & $1$ & $<0.001$ & $-2.54$ & $56.96$ & $0.19$ &
$<0.001$\\
$\Delta\ln P_{t}^{Chi}$ & $7$ & $<0.001$ & $2.84$ & $22.45$ & $0.09$ &
$0.09$\\
$\Delta\ln P_{t}^{Costa}$ & $4$ & $<0.001$ & $1.01$ & $8.73$ & $0.47$ &
$0.30$\\
$\Delta\ln P_{t}^{Mex}$ & $3$ & $<0.001$ & $-0.40$ & $13.81$ & $0.20$ &
$<0.001$\\\hline
\end{tabular}
\end{table}

\begin{table}
\label{tab:MAR_identify_by_srar}
\caption{SRAR identification results}\vspace{5pt}
{\small
\begin{tabular}
[c]{|lccccc|c|}\hline\hline
Country & $SRAR_{\tau=0.1}$ & $SRAR_{\tau=0.3}$ & $SRAR_{\tau=0.5}$ &
$SRAR_{\tau=0.7}$ & $SRAR_{\tau=0.9}$ & $SRAR_{total}$\\\hline
$\Delta\ln P_{t}^{Bra}$ & $MAR(0,1)$ & $MAR(0,1)$ & $MAR(0,1)$ & $MAR(1,0)$ & $MAR(1,0)$
& $MAR(0,1)$\\
$\Delta\ln P_{t}^{Chi}$ & $MAR(7,0)$ & $MAR(7,0)$ & $MAR(7,0)$ & $MAR(0,7)$ & $MAR(0,7)$
& $MAR(7,0)$\\
$\Delta\ln P_{t}^{Costa}$ & $MAR(0,4)$ & $MAR(0,4)$ & $MAR(0,4)$ & $MAR(4,0)$ &
$MAR(4,0)$ & $MAR(0,4)$\\
$\Delta\ln P_{t}^{Mex}$ & $MAR(0,3)$ & $MAR(0,3)$ & $MAR(0,3)$ & $MAR(3,0)$ & $MAR(3,0)$
& $MAR(0,3)$\\\hline
\end{tabular}
}
\end{table}

\section{Conclusions}
\label{sec:conclusion}
This paper introduces a new way to detect noncausal from causal models by comparing residuals from quantile autoregressions developed by Koenker and Xiao (2006) and from the time-reverse specifications. To adapt to heavy tailed distributions, we generalize the quantile autoregression theory for regularly varying distributions. This also confirms the validity of quantile autoregressions in analysing heavy tailed time series, such as explosive or bubble-type dynamics. It is natural to consider SRAR as a model selection criterion in the quantile regression framework. However due to the crossing feature of SRAR plots as presented in this paper, we propose to use the aggregate SRAR criterion for model selection. The robustness in its performance has been seen from all the results in this paper. 
In the empirical study on the inflation rates of four Latin American countries, we found that the purely noncausal specification is favoured in three cases. 

Finally a possible extension of our approach is the identification of mixed models in addition to purely causal and noncausal specifications. Also, a formal testing on SRAR differences would require the application of a bootstrap approach that is beyond the scope of our paper. 
\clearpage

\appendix                                                                                                               
\section*{Appendix}

\subsection*{Alternative way to simulate MAR models}
Suppose that the DGP is a MAR($r,s$) as in~\eqref{MAR}. First, we rewrite~\eqref{MAR} into a matrix representation as follows:
\begin{equation}
\begin{aligned}
\mathbf{M}\boldsymbol{y} & = \boldsymbol{\varepsilon},\\
\mathbf{M} & := \begin{bmatrix} 
\pi(L)\phi(L^{-1}) & 0 &\ldots & 0 \\
0 & \pi(L)\phi(L^{-1}) &\ldots & 0 \\
	& \ldots	&	&	\\
0	& 0	& \ldots	& \pi(L)\phi(L^{-1})	
\end{bmatrix},\\
\boldsymbol{y} & := \begin{bmatrix} 
y_1 & y_2 & \ldots	& y_T
\end{bmatrix}' ,\\
\boldsymbol{\varepsilon} & := \begin{bmatrix} 
\varepsilon_1 & \varepsilon_2 & \ldots	& \varepsilon_T
\end{bmatrix}' ,
\end{aligned}
\label{eq:MAR_matrix_rep}
\end{equation}
where $\mathbf{M}$ is $T\times T$ matrix and $T$ is the sample size. The equivalence to ~\eqref{MAR} holds by assuming $ y_{1-r}, y_{2-r}, \ldots, y_{0}$ and $y_{T+1}, y_{T+2}, \ldots, y_{T+s}$ are all zeros. This assumption effect can be neglected by deleting enough observations from the beginning and the end of a simulated sample, for instance, $\left\{ y_t \right\}_{t = 201}^{T-200}$ kept for analysis from a first simulated $\left\{ y_t \right\}_{t = 1}^T$. Next, $\mathbf{M}$ can be decomposed into a product of two diagonal matrices, denoted as $\mathbf{L}$ and $\mathbf{U}$, of main diagonal entries being $\pi(L)$ and $\phi(L^{-1})$ respectively as follows. 
\begin{equation}
\begin{aligned}
\mathbf{L} & = \begin{bmatrix} 
1 		& 0 		& 0 		&  0	 		&\ldots &	 	& 0 \\
-\pi_1 	& 1			& 0 		&  0			&\ldots &	 	& 0 \\
-\pi_2 	&-\pi_1 	& 1 		&  0 		&\ldots &	 	& 0 \\
		& \ldots	&			&		 	&\ldots & 		&   \\
0		& 		& \ldots 	& -\pi_r 	&\ldots & -\pi_1	& 1	
\end{bmatrix},\\
\mathbf{U} & = \begin{bmatrix} 
1 	& -\psi_1 	& \ldots 	& -\psi_s 	& 0	 		&  	 & \ldots	& 0\\
0 	& 1			& -\psi_1 	& \ldots 	& -\psi_s 	& 0 &	\ldots 	& 0 \\
\ldots 	& 			& \ldots  & 	&  	&  &	\ldots 	&  \\
0 	& 			&  \ldots & 	&  	& 0 &	\ldots 	& 1
\end{bmatrix}
\end{aligned}
\label{eq:decompose_A}
\end{equation}
Substitute \eqref{eq:decompose_A} into \eqref{eq:MAR_matrix_rep}. We get
\begin{equation*}
\mathbf{L}\mathbf{U}\boldsymbol{y} = \boldsymbol{\varepsilon},
\end{equation*}
such as
\begin{equation}
\boldsymbol{y} = \mathbf{U}^{-1}\mathbf{L}^{-1}\,\boldsymbol{\varepsilon}\,.
\label{eq:dgp_formula_mar}
\end{equation}
Given $\boldsymbol{\varepsilon}$, $\boldsymbol{y}$ can be obtained directly since $\mathbf{L}$ and $\mathbf{U}$ are positive definite triangular matrices. This $MAR(r,s)$ simulating method can easily be generalized, for instance, for an $MAR(r,s)$ involving some exogenous independent variables presented by Hecq, Issler and Telg~(2017). In practice this vector-wise simulation method is slower than the element-wise method because of the matrix creation and storage in simulation.

\subsection*{Proof of Theorem~\ref{thm:asym_regvary_NCAR}}
\begin{proof}
$ $\newline
First, we rewrite SRAR($\tau, \hat{\boldsymbol{\theta}}(\tau )$) as follows:
\begin{equation}
\begin{aligned}
\text{SRAR}(\tau,\hat{\boldsymbol{\theta}}(\tau ))	& = \sum\limits_{t=1}^{T}\rho_{\tau}(y_{t}-\boldsymbol{x}_{t}^{\prime}\hat{\boldsymbol{\theta}}(\tau ))\\
				& = \sum\limits_{t=1}^{T}\rho_{\tau} (y_{t}- \boldsymbol{x}_{t}^{\prime}\boldsymbol{\phi}_{\tau} + \boldsymbol{x}_{t}^{\prime}\boldsymbol{\phi}_{\tau} - \boldsymbol{x}_{t}^{\prime}\hat{\boldsymbol{\theta}}(\tau ) )
				\\
						& =  \sum\limits_{t=1}^{T}\rho_{\tau}\left( u_{t\tau} - \frac{1}{a_T\sqrt{T}}\boldsymbol{\nu}'\boldsymbol{x}_t\right),
\end{aligned}
 \label{eq:SRAR_phi_theta}%
\end{equation}
where $ \boldsymbol{x}_{t}^{\prime} :=\left[  a_T,y_{t+1},\ldots,y_{t+p}\right] $, $u_{t\tau}:= y_{t}-\boldsymbol{x}_{t}^{\prime}\boldsymbol{\phi}_{\tau} = \varepsilon_t - F^{-1}(\tau)$, $\boldsymbol{\nu}: = a_T\sqrt{T}\left( \hat{\boldsymbol{\theta}}(\tau ) - \boldsymbol{\phi}_{\tau}\right)$. We know from Davis and Resnick (1985) and Knight (1989, 1991) that
\begin{equation}
\begin{aligned}
\frac{1}{a_T}\left( \sum\limits_{t=1}^{\floor{T\cdot s}}\left( \varepsilon_t - b_T\right)	\right)		& \overset{d}{\sim}	\mathrm{S}_{\alpha}(s),
					\\
\frac{1}{a_T\,\sqrt{T}}\sum\limits_{t=1}^{T}\left( y_{t} - \floor{T\cdot s}\sum\limits^{\infty}_{j=0} c_j b_T\right) 	& \overset{d}{\sim}\sum\limits^{\infty}_{j=0} c_j\, \int^1_0 \mathrm{S}_{\alpha}(s)\, ds,
						\\
\frac{1}{a_T^2\,T}\sum\limits_{t=1}^{T}\left( y_t\cdot y_{t+h} - \floor{T\cdot s}\sum\limits^{\infty}_{j=0} c_j\, c_{j+h} b_T^2\right)	& \overset{d}{\sim}\sum\limits^{\infty}_{j=0}  c_j\, c_{j+h}\, \int^1_0 \mathrm{S}_{\alpha}^2(s)\, ds,		
\end{aligned}
\label{eq:dist_limit_MA}
\end{equation}
where $ t= \floor{T\cdot s}$, and $\left\{\mathrm{S}_{\alpha}(s)\right\}$ is a process of stable distributions with index $\alpha$. Without loss of generality, we assume $b_T=0$ for the proof below. In use of the limiting behaviour information presented in~\eqref{eq:dist_limit_MA}, we get that
\begin{equation}
\begin{aligned}
\frac{1}{a_t^2 T}\sum^{T}_{t=1}\boldsymbol{x}_t\boldsymbol{x}^{\prime}_{t} \overset{d}{\sim}  & = \begin{bmatrix} 
1 		& \boldsymbol{0} 	\\
\boldsymbol{0}		& \Omega_{\boldsymbol{S}}\, \Omega_1
\end{bmatrix}_{(p+1)\times (p+1)} 
\end{aligned}
\end{equation}
where 
\begin{equation*}
\begin{aligned}
\Omega_1 & := \bigl[
\omega_{ik}
\bigr]_{p\times p}, \\
\omega_{ik} &:= \sum\limits^{\infty}_{j=0}  c_j\, c_{j+|k-i|},  \\
\Omega_{\boldsymbol{S} } &:= \text{diag}\left( \int_0^1 \mathrm{S}_{\alpha}^2(s)\,ds, \int_0^1 \mathrm{S}_{\alpha}^2(s + \frac{1}{T})\,ds,\ldots, \int_0^1 \mathrm{S}_{\alpha}^2(s + \frac{p-1}{T})\,ds\right),
\end{aligned}
\end{equation*}
with $\omega_{ik}$ being the element at $\Omega_1$'s $i$-th row and $k$-th column, and $\Omega_{\boldsymbol{S} }$ being a $p\times p$ diagonal matrix with the $j-$th diagonal entry being $ \int_0^1 \mathrm{S}_{\alpha}^2(s + \frac{j-1}{p})\,ds $, $j\in\left\{1,2,\ldots,p \right\}$.  $\Omega_1$ is positive definite symmetric.
Note that $\hat{\boldsymbol{\theta}}(\tau)=\argmin\limits_{\boldsymbol{\theta}\in\mathbb{R}^{p+1}}\;\text{SRAR}(\tau,\boldsymbol{\theta})$ which also minimizes
\begin{equation}
Z_{T}(\boldsymbol{\nu}):= \sum\limits_{t=1}^{T} \left[ \rho_{\tau}\left( u_{t\tau} - \frac{1}{a_T\sqrt{T}}\boldsymbol{\nu}'\boldsymbol{x}_t\right) - \rho_{\tau}\left( u_{t\tau} \right) \right].
\label{eq:equiv_mini}
\end{equation}
$Z_{T}(\boldsymbol{\nu})$ is a convex random function. Knight (1989) showed that if $Z_{T}(\boldsymbol{\nu})$ converges in distribution to $Z(\boldsymbol{\nu})$ and $Z(\boldsymbol{\nu})$ has unique minimum, then the convexity of $Z_{T}(\boldsymbol{\nu})$ ensures 
$\hat{\boldsymbol{\nu}} = \argmin\limits_{\boldsymbol{\nu}\in \mathbb{R}^{p+1}} Z_{T}(\boldsymbol{\nu})$ converging in distribution to $ \argmin\limits_{\boldsymbol{\nu}\in \mathbb{R}^{p+1}} Z(\boldsymbol{\nu})$. \\
By using the following check function identity:
\begin{equation}
\begin{aligned}
\rho_{\tau}(v_1 - v_2) - \rho_{\tau}(v_1)		& = -v_2 \xi_{\tau}(v_1) + (v_1 - v_2)\left( I(0> v_1 > v_2) - I(0 < v_1 < v_2)\right) 
\\
		& = -v_2 \xi_{\tau}(v_1) + \int^{v_2}_0 \left( I( v_1 \leq s ) - I( v_1 < 0)\right) ds,
\end{aligned}
\label{eq:check_func_identity}
\end{equation}
where $\xi_{\tau}(v) := \tau - I(v<0)$, we can rewrite $Z_{T}(\boldsymbol{\nu})$ into
\begin{equation}
\begin{aligned}
Z_{T}(\boldsymbol{\nu})	& = -\sum^{T}_{t=1} \frac{1}{a_T\sqrt{T}}\boldsymbol{\nu}'\boldsymbol{x}_t\, \xi_{\tau}(u_{t\tau}) + \sum^{T}_{t=1} \int^{\frac{1}{a_T\sqrt{T}}\boldsymbol{\nu}'\boldsymbol{x}_t}_0 \left( I( u_{t\tau} \leq s ) - I( u_{t\tau} < 0)\right) ds
						\\
						& = Z_{T}^{(1)}(\boldsymbol{\nu}) + Z_{T}^{(2)}(\boldsymbol{\nu}) ,			
\end{aligned}
\end{equation}
where $Z_{T}^{(2)}(\boldsymbol{\nu}):=  \sum^{T}_{t=1} \int^{\frac{1}{a_T\sqrt{T}}\boldsymbol{\nu}'\boldsymbol{x}_t}_0 \left( I( u_{t\tau} \leq s ) - I( u_{t\tau} < 0)\right) ds $ and $Z_{T}^{(1)}(\boldsymbol{\nu}):= -\sum^{T}_{t=1} \frac{1}{a_T\sqrt{T}}\boldsymbol{\nu}'\boldsymbol{x}_t\, \xi_{\tau}(u_{t\tau}) $. Further denote
$\,
\eta_t(\boldsymbol{\nu}) := \int^{\frac{1}{a_T\sqrt{T}}\boldsymbol{\nu}'\boldsymbol{x}_t}_0 \left( I( u_{t\tau} \leq s ) - I( u_{t\tau} < 0)\right) ds
$,
$\,
\bar{\eta}_t(\boldsymbol{\nu}) := E\left[ \eta_t(\boldsymbol{\nu})	| \boldsymbol{x}_t \right]
$ and $\,
\overline{Z}_{T}^{(2)}(\boldsymbol{\nu}) := \sum^{T}_{t=1}\bar{\eta}_t(\boldsymbol{\nu}).
$
By Assumption (A5) and small enough $\frac{1}{a_T\sqrt{T}}\boldsymbol{\nu}'\boldsymbol{x}_t$, we further rewrite $\overline{Z}_{T}^{(2)}(\boldsymbol{\nu}) $ as follows:
\begin{equation}
\begin{aligned}
\overline{Z}_{T}^{(2)}(\boldsymbol{\nu}) & = \sum^{T}_{t=1}E\left[ \int^{\frac{1}{a_T\sqrt{T}}\boldsymbol{\nu}'\boldsymbol{x}_t}_0 \left( I( u_{t\tau} \leq s ) - I( u_{t\tau} < 0)\right) ds	\middle| \boldsymbol{x}_t \right]		\\
					& = \sum^{T}_{t=1}  \int^{\frac{1}{a_T\sqrt{T}}\boldsymbol{\nu}'\boldsymbol{x}_t}_0 \left[ \int^{s+F^{-1}(\tau)}_{F^{-1}(\tau)} f(r) dr \right] ds		\\
					& = \sum^{T}_{t=1}  \int^{\frac{1}{a_T\sqrt{T}}\boldsymbol{\nu}'\boldsymbol{x}_t}_0	\frac{F\left( s+F^{-1}(\tau)\right) - F\left( F^{-1}(\tau)\right)}{s} s\, ds		\\
					& = \sum^{T}_{t=1}  \int^{\frac{1}{a_T\sqrt{T}}\boldsymbol{\nu}'\boldsymbol{x}_t}_0	f\left( F^{-1}(\tau)\right) s\, ds			\\
					& = \frac{1}{2 a_T^2\,T}\sum^{T}_{t=1} f\left( F^{-1}(\tau)\right)\boldsymbol{\nu}'\boldsymbol{x}_t\boldsymbol{x}^{\prime}_{t}\boldsymbol{\nu} + o_p(1)	\\
					& = \frac{1}{2 a_T^2\,T}\,f\left( F^{-1}(\tau)\right)\,\boldsymbol{\nu}' \left(\sum^{T}_{t=1}\boldsymbol{x}_t\boldsymbol{x}^{\prime}_{t} \right)\boldsymbol{\nu} + o_p(1)
\end{aligned}
\end{equation}
Using the limiting behaviour information presented in~\eqref{eq:dist_limit_MA}, we get the limiting distribution for $\overline{Z}_{T}^{(2)}(\boldsymbol{\nu})$ so as for $ Z_{T}^{(2)}(\boldsymbol{\nu})  $ as follows:
\begin{equation}
\begin{aligned}
Z_{T}^{(2)}(\boldsymbol{\nu})  \overset{d}{\sim}	\frac{1}{2} f\left( F^{-1}(\tau)\right)\,\boldsymbol{\nu}'   \begin{bmatrix} 
1 		& \boldsymbol{0} 	\\
\boldsymbol{0}		& \Omega_{\boldsymbol{S}}\, \Omega_1
\end{bmatrix} \,\boldsymbol{\nu},
\label{eq:asym_WT}
\end{aligned}
\end{equation}
by the fact that $ Z_{T}^{(2)}(\boldsymbol{\nu}) - \overline{Z}_{T}^{(2)}(\boldsymbol{\nu})  \overset{p}{\sim}\,0\;$ which can be proved by following the arguments of Knight (1989).
\\
The limiting distribution of $ Z_{T}^{(1)}(\boldsymbol{\nu})$ can also be deduced in using ~\eqref{eq:dist_limit_MA} as follows.
\begin{equation}
\begin{aligned}
-\sum^{T}_{t=1} & \frac{1}{a_T\sqrt{T}}\boldsymbol{\nu}'\boldsymbol{x}_t\, \xi_{\tau}(u_{t\tau}) \overset{d}{\sim}  \\
			& \boldsymbol{\nu}' \left[ \sigma_{\xi}  W(1), \sum\limits^{\infty}_{j=0} c_j\,\sigma_{\xi} \int^1_0 \mathcal{S}_{\alpha}(s)\,dW(s)\,, \ldots, \sum\limits^{\infty}_{j=0} c_j\,\sigma_{\xi} \int^1_0 \mathcal{S}_{\alpha}(s + \frac{p-1}{T})\,dW(s) \right]_{(p+1)\times 1}, 
\end{aligned}
\end{equation}
where $\left[ \ldots\right]_{(p+1)\times 1}$ is a column vector of $\left(p+1\right)$ elements, $\int dW(s)$ is a stochastic integral with Brownian motion $\left\{W(s) \right\}$ independent of $\left\{\mathrm{S}_{\alpha}(s)\right\}$ (see Knight (1991)), and $\sigma_{\xi}$ is the standard deviation of $  \xi_{\tau}(u_{t\tau})$ which equals $\sqrt{\tau(1-\tau)}$. Therefore by Davis and Resnick (1985) and Knight (1989, 1991), 
\begin{equation}
Z_{T}^{(1)}(\boldsymbol{\nu}) 
\overset{d}{\sim} 
\boldsymbol{\nu}' \sqrt{\tau(1-\tau)}\left[ W(1), \sum\limits^{\infty}_{j=0} c_j\, \int^1_0 \mathcal{S}_{\alpha}(s)\,dW(s)\,, \ldots, \sum\limits^{\infty}_{j=0} c_j\, \int^1_0 \mathcal{S}_{\alpha}(s + \frac{p-1}{T})\,dW(s) \right]_{(p+1)\times 1}.
\end{equation}
Thus,
\begin{equation}
\begin{aligned}
Z_{T}(\boldsymbol{\nu})	& \overset{d}{\sim} 
								Z(\boldsymbol{\nu}):= \\														&  \boldsymbol{\nu}' \sqrt{\tau(1-\tau)} \left[ W(1), \sum\limits^{\infty}_{j=0} c_j\, \int^1_0 \mathcal{S}_{\alpha}(s)\,dW(s)\,, \ldots, \sum\limits^{\infty}_{j=0} c_j\, \int^1_0 \mathcal{S}_{\alpha}(s+ \frac{p-1}{T})\,dW(s) \right]_{(p+1)\times 1}
				\\
				& + \frac{1}{2} f\left( F^{-1}(\tau)\right)\,\boldsymbol{\nu}'   \begin{bmatrix} 
1 		& \boldsymbol{0} 	\\
\boldsymbol{0}		& \Omega_{\boldsymbol{S}}\, \Omega_1
\end{bmatrix} \,\boldsymbol{\nu}.
\end{aligned}
\end{equation}
and so
\begin{equation*}
\begin{aligned}
	\frac{f\left( F^{-1}(\tau)\right) \cdot a_T\sqrt{T}}{\sqrt{\tau(1-\tau)}\,}  & \left(\hat{\boldsymbol{\theta}}(\tau ) - \boldsymbol{\phi}_{\tau} \right) \overset{d}{\sim}  \qquad\qquad	\\
 \begin{bmatrix} 
1 		& \boldsymbol{0} 	\\
\boldsymbol{0}		& \Omega_1^{-1}\Omega_{\boldsymbol{S}}^{-1}
\end{bmatrix} & \left[ W(1), \sum\limits^{\infty}_{j=0} c_j\, \int^1_0 \mathcal{S}_{\alpha}(s)\,dW(s)\,, \ldots, \sum\limits^{\infty}_{j=0} c_j\, \int^1_0 \mathcal{S}_{\alpha}(s+\frac{p-1}{T})\,dW(s) \right]_{(p+1)\times 1}.
\end{aligned}
\end{equation*}
follows by setting the derivative of $Z(\boldsymbol{\nu})$ to $0$ and solving for $\boldsymbol{\nu}$.
\end{proof}

\end{document}